\providecommand{\U}[1]{\protect\rule{.1in}{.1in}}
\newtheorem{theorem}{Theorem}
\newtheorem{acknowledgement}{Acknowledgement}
\newtheorem{corollary}{Corollary}
\newtheorem{lemma}{Lemma}
\newtheorem{proposition}{Proposition}
\newtheorem{remark}{Remark}
\newenvironment{proof}[1][Proof]{\noindent\textbf{#1.} }{\ \rule{0.5em}{0.5em}}
\begin{document}

\title{Context-invariant quasi hidden variable (qHV) modelling of all joint
von Neumann measurements for an arbitrary Hilbert space}
\author{Elena R. Loubenets}
\affiliation{Moscow State Institute of Electronics and Mathematics, \\
Moscow 109028, Russia}

\begin{abstract}
We prove the existence for each Hilbert space of the two new quasi hidden
variable (qHV) models, \emph{statistically} \emph{noncontextual and
context-invariant, }reproducing all the von Neumann joint probabilities via
nonnegative values of real-valued measures and all the quantum product
expectations -- via the qHV (classical-like) average of the product of the
corresponding random variables. In a context-invariant model, a quantum
observable $X$\ can be represented by a variety of random variables
satisfying the functional condition required in quantum foundations but each
of these random variables equivalently models $X$\ under all joint von
Neumann measurements, regardless of their contexts. The\ proved\ existence
of this model \emph{negates the general opinion} that, in terms of random
variables, the Hilbert space description of all the joint von Neumann
measurements for $\dim \mathcal{H}\geq 3$ can be reproduced only
contextually. The existence of a statistically noncontextual qHV\ model, in
particular, implies that \emph{every N-partite quantum state admits a local
quasi hidden variable (LqHV) model }introduced in [Loubenets, \emph{J. Math.
Phys.} \textbf{53}, 022201 (2012)]. The new results of the present paper
point also to the generality of the quasi-classical probability model
proposed in [Loubenets, \emph{J. Phys. A: Math}. \emph{Theor}. \textbf{45},
185306 (2012)].
\end{abstract}

\maketitle
\tableofcontents

\section{Introduction}

The relation between the quantum probability model and the classical
probability model has been a point of intensive discussions ever since the
seminal publications of von Neumann \cite{1}, Kolmogorov \cite{7}, and
Einstein, Podolsky and Rosen (EPR)\ \cite{2}. In the frame of the quantum
formalism\textbf{, }the interpretation of von Neumann measurements in
classical probability terms, that is, via random variables and probability
measures on a measurable space \cite{4-} $(\Omega ,\mathcal{F}_{\Omega }),$
is generally referred to as \emph{a hidden variable (HV) model}, but a
setting of a HV\ model depends essentially on its aim. Moreover, in the
literature, the HV models are usually divided into noncontextual and
contextual.

\emph{In a noncontextual model}, each quantum observable $X$ is represented
on $(\Omega ,\mathcal{F}_{\Omega })$ \emph{by only one} random variable $%
f_{X}$ with values in the spectrum \textrm{sp}$X$ of this observable $X.$

\emph{In a contextual model}, not only there are quantum observables $%
X_{\gamma },$ $\gamma \in \Upsilon ,$ each modelled by a variety of random
variables on $(\Omega ,\mathcal{F}_{\Omega }),$ but also -- which of these
random variables represents an observable $X_{\gamma }$ under a joint von
Neumann measurement depends specifically on a \emph{context} of this
measurement, i. e. on other compatible quantum observables measured jointly
with $X_{\gamma }.$

\emph{In foundations of quantum theory, }where\emph{\ }a\emph{\ }HV\ model
aims to reproduce in classical probability terms the statistical properties
of \emph{all} quantum observables on a Hilbert space $\mathcal{H}$, the
intention to mimic all the properties of quantum averages and quantum
correlations was realized via some additional functional assumptions \cite%
{1, 5, holevo} on a correspondence between random variables on $(\Omega ,%
\mathcal{F}_{\Omega })$ and quantum observables on $\mathcal{H}$. Under
these functional assumptions, for $\dim \mathcal{H}\geq 3,$ there does not
exist a noncontextual HV\ model reproducing the Hilbert space description of
all the joint von Neumann measurements (for details see section II).

\emph{In quantum information theory,} a HV model aims to reproduce \emph{only%
} the probabilistic description of a quantum correlation scenario\emph{\ }%
upon a state $\rho $ on a Hilbert space $\mathcal{H}_{1}\otimes \cdots
\otimes \mathcal{H}_{N}$, so that a noncontextual HV model is introduced 
\cite{10} directly via the noncontextual representation of all scenario
joint probabilities in classical probability terms. This HV representation
mimics by itself all the needed properties of quantum averages and quantum
correlations, so that, for a quantum correlation scenario, a setting of a
noncontextual HV model is not supplied \cite{10} by any additional
assumptions on a correspondence between random variables and modelled
quantum observables. As a result, for each dimension of a Hilbert space $%
\mathcal{H}_{1}\otimes \cdots \otimes \mathcal{H}_{N}$, there exist \cite{10}
quantum correlation scenarios admitting noncontextual HV\ models. Ever since
the arguments of Bell \cite{4, 3}, a noncontextual HV model for a quantum
correlation scenario is generally referred to as local and a LHV\ model, for
short (see section II for details).

From the point of view of quantum information applications, the existence of
LHV models for some quantum correlation scenarios questioned whether there
exists\emph{\ }a probability model which could reproduce the probabilistic
description of every quantum correlation scenario via random variables, each
depending only on a setting of the corresponding measurement at the
corresponding site, i. e. via "local" random variables.

As we proved in Ref. 10, the answer to this question is positive\emph{\ -- }%
the probabilistic description of \emph{every }quantum correlation scenario
does admit a new local probability model --\emph{\ a local quasi hidden
variable (LqHV) model, }where locality and the measure theory structure $%
(\Omega ,\mathcal{F}_{\Omega },\nu )$ inherent to a LHV model are preserved
but positivity of a simulation measure $\nu $ is dropped.

In a (deterministic \cite{12-}) LqHV model, specified by a measure space $%
(\Omega ,\mathcal{F}_{\Omega },\nu )$, all the joint probabilities and all
the quantum product averages of a correlation scenario are reproduced via
nonnegative values of a normalized real-valued measure $\nu $ and "local"
random variables on $(\Omega ,\mathcal{F}_{\Omega })$.

Moreover, we showed in Ref. 12 that the probabilistic description of every 
\emph{nonsignaling} \cite{10} correlation scenario, not necessarily quantum,
also admits a LqHV model.

Based on our results in Refs. 10, 12, we also introduced \cite{13} the
notion of \emph{a quasi-classical probability model }- a new general
probability model, which is specified in terms of a measure space $(\Omega ,%
\mathcal{F}_{\Omega },\nu )$ with a normalized real-valued measure $\nu $
and where a joint measurement with outcomes $(\lambda _{1},...,\lambda
_{n})\in \Lambda _{\theta _{1}}\times \cdots \times \Lambda _{\theta _{n}}$
and marginal measurements, represented by random variables $f_{\theta
_{i}}:\Omega \rightarrow \Lambda _{\theta _{i}},$ $i=1,...,n,$ is possible
if and only if 
\begin{equation}
\nu (f_{\theta _{1},...,\theta _{n}}^{-1}(B))\geq 0,\text{ \ \ }\forall
B\subseteq \Lambda _{\theta _{1}}\times \cdots \times \Lambda _{\theta _{n}},
\end{equation}%
where $f_{\theta _{1},...,\theta _{n}}=(f_{\theta _{1}},...,f_{\theta _{n}})$
and $f_{\theta _{1},...,\theta _{n}}^{-1}(B)=\{ \omega \in \Omega \mid
(f_{\theta _{1}}(\omega ),...,f_{\theta _{n}}(\omega ))\in B\}.$ This new
probability model reduces to the Kolmogorov probability model \cite{7, 17}
iff a real-valued measure $\nu $ is positive.

If a quasi-classical probability model is applied for the description of
quantum measurements, then, according to our terminology in Refs. 10, 12, we
refer to it as a qHV model.

In the present paper, we analyze further possibilities of the qHV approach 
\cite{12, 13} and prove that, for each Hilbert space, the Hilbert space
description of all the joint von Neumann measurements can be reproduced via%
\emph{\ }either of the two new qHV\ models, which we call as \emph{%
statistically noncontextual and context-invariant}.

\emph{In a statistically noncontextual model}\textbf{, }each quantum
observable $X$ is represented on $(\Omega ,\mathcal{F}_{\Omega })$ \emph{by
only one} random variable $f_{X}$ with values in the spectrum \textrm{sp}$X$
of this observable $X$ and the Kochen-Specker functional conditions\textbf{\ 
}\cite{5} are satisfied in average (for details see section II below).

\emph{In a context-invariant model}\textbf{, }a quantum observable $X$\ can
be modelled on\textbf{\ }$(\Omega ,\mathcal{F}_{\Omega })$ by a variety of
random variables satisfying the functional condition generally required in
quantum foundations (see relation (\ref{05}) below) but, in contrast to a
contextual model, each of these random variables equivalently models $X$\
under all joint von Neumann measurements, independently of their contexts,
in other words, the representation of $X$ by each of these random variables
is invariant with respect to a context of a joint von Neumann measurement.

For $\dim \mathcal{H\geq }4$, the HV\ versions of these models cannot exist
(see section II).

The\ proved existence of\emph{\ a context-invariant qHV\ model negates }$%
\emph{t}$\emph{he general opinion} that, via random variables satisfying the
functional condition required in quantum foundations, the Hilbert space
description of all the joint von Neumann measurements for $\dim \mathcal{H}%
\geq 3$ can be reproduced only contextually.

The paper is organized as follows.

In section II, we shortly review the settings of the HV models available in
the literature.

In section III, we recall the von Neumann formalism for the description of
ideal (projective) quantum measurements and the notion of the spectral
measure of a quantum observable.

In section IV, based on our generalization (appendix B) of some items of the
Kolmogorov extension theorem \cite{7} to the case of consistent
operator-valued measures, we prove (theorem 1) that all symmetrized products
of spectral measures admit the representation via the uniquely defined
self-adjoint operator-valued measure and the specific random variables on
the specially constructed measurable space $(\Lambda ,\mathcal{F}_{\Lambda
}) $.

In section V, we apply the new mathematical results of section IV to qHV
modelling (theorems 2, 3, corollaries 2, 3, propositions 2, 3) of the
Hilbert space description of the all joint von Neumann measurements for an
arbitrary Hilbert space.

In section VI, we discuss the main new results of the present paper.

\section{Preliminaries}

In this section, we shortly review settings of the HV\ models available in
quantum foundations and quantum information.

\emph{In foundations of quantum theory, }the first \emph{"no-go" theorem} on
non-existence of \emph{a noncontextual HV\ model }reproducing the
statistical properties of all quantum observables on $\mathcal{H}$ was
introduced \cite{1} by von Neumann in 1932.

However, analyzing \cite{4, 3} this problem in 1964 - 1966, Bell explicitly
constructed \cite{3} the noncontextual HV model for all quantum observables
of the qubit $(\dim \mathcal{H=}2)$ and argued \cite{3} that, though the
proof of the von Neumann "no-go" theorem is mathematically correct, the
setting of this theorem contains the linearity assumption which is
inconsistent with the quantum formalism and is not, in particular, fulfilled
in the specific HV\ model presented by him in Ref. 9.

In 1967, Kochen and Specker introduced \cite{5} a new setting for a
noncontextual HV model where a mapping $X\overset{\Phi }{\mapsto }f_{X},$ $%
f_{X}(\Omega )=\mathrm{sp}X,$ from the set of all quantum observables on $%
\mathcal{H}$ into the set of all random variables on a measurable space $%
(\Omega ,\mathcal{F}_{\Omega })$ was supplied by the physically motivated
functional condition%
\begin{eqnarray}
\Phi (\varphi \circ X) &=&\varphi \circ \Phi (X)  \label{03} \\
&\equiv &\varphi \circ f_{X}  \notag
\end{eqnarray}%
for all quantum observables $X$ on $\mathcal{H}$ and all Borel real-valued
functions $\varphi :\mathbb{R\rightarrow R}.$ Since quantum observables $%
X_{1},...,X_{n}$ mutually commute iff \cite{5---} there exist a quantum
observable $Z$ and Borel functions $\varphi _{i}:\mathbb{R\rightarrow R}$
such that all $X_{i}=\varphi _{i}(Z),$ $i=1,...,n,$ condition (\ref{03})
does imply linearity of a mapping $\Phi $:%
\begin{eqnarray}
\Phi (X_{1}+\cdots +X_{n}) &=&\Phi (X_{1})+\cdots +\Phi (X_{n})  \label{01}
\\
&\equiv &f_{X_{1}}+\cdots +f_{X_{n}}  \notag
\end{eqnarray}%
and also its multiplicativity%
\begin{equation}
(\Phi (X_{1}\cdot \ldots \cdot X_{n}))(\omega )=f_{X_{1}}(\omega )\cdot
\ldots \cdot f_{X_{n}}(\omega ),\text{ \ \ }\forall \omega \in \Omega ,
\label{02}
\end{equation}%
but -- \emph{only for mutually commuting quantum observables.}

Kochen and Specker proved \cite{5} that, for a Hilbert space of a dimension $%
\dim \mathcal{H}\geq 3,$ a noncontextual HV model supplied by the functional
conditions (\ref{01}), (\ref{02}) cannot exist.

For the case where, in a setting of a HV model, a quantum observable $X$ on $%
\mathcal{H}$ can be represented by a variety $\{f_{X}^{(\theta )},$ $\theta
\in \Theta _{X}\}$ of random variables on $(\Omega ,\mathcal{F}_{\Omega })$,
it was proved (proposition 1.4.2 in Ref. 6) that, for each Hilbert space%
\emph{\ }$\mathcal{H}\emph{,}$ there exist a measurable space $(\Omega ,%
\mathcal{F}_{\Omega })$ and a mapping $\Psi $ from a set of random variables
on $(\Omega ,\mathcal{F}_{\Omega })$ onto the set of all quantum observables
on $\mathcal{H}$ such that (i) the HV representation $\left \langle
X\right
\rangle _{\rho }=\langle f_{X}^{(\theta )}\rangle _{_{HV}}$ of the
quantum average of each observable $X$ in a state $\rho $ and (ii) the
functional condition 
\begin{eqnarray}
\Psi (\varphi \circ f_{X}^{(\theta )}) &=&\varphi \circ \Psi (f_{X}^{(\theta
)})  \label{05} \\
&=&\varphi \circ X,\text{ \ \ }\forall X,\text{ \ }\forall \varphi :\mathbb{%
R\rightarrow R},  \notag
\end{eqnarray}%
similar by its sense to condition (\ref{03}) are fulfilled for each of
random variables $\{f_{X}^{(\theta )},$ $\theta \in \Theta _{X}\}$
representing a quantum observable $X.$ If $\dim \mathcal{H}\geq 3,$ then
this mapping $\Psi $ cannot be injective due to the Kochen-Specker result 
\cite{5} and, therefore, the HV model specified by the above setting cannot
be \emph{noncontextual}. Furthermore, for $\dim \mathcal{H}\geq 3,$ this HV
model implies \cite{8--} the \emph{contextual} description of a joint von
Neumann measurement of mutually commuting quantum observables $%
X_{1},...,X_{n}$ -- in the sense that, under a joint von Neumann
measurement, an observable $X_{i}$ is, in general, represented by a random
variable\emph{\ }specific for a \emph{context} of this joint measurement.%
\emph{\ }Therefore, for a Hilbert space of a dimension $\dim \mathcal{H}\geq
3,$ the HV\ model specified by proposition 1.4.2 in Ref. 6 is \emph{%
contextual.}

Moreover, it is generally argued that this \emph{contextuality} is always
the case whenever a mapping $\Psi $ from a set of random variables on $%
(\Omega ,\mathcal{F}_{\Omega })$ onto the set of all quantum observables on $%
\mathcal{H}$ is non-injective. As we prove in section V, this opinion is
misleading.

\emph{In\ quantum\ information,\ }for some quantum correlation scenarios,
their noncontextual HV description is possible, while for others -- is
impossible. Since contextual HV modelling is possible for all $N$-partite
joint measurements on a $N$-partite quantum state, Bell presumed \cite{4, 3}
that, for an $N$-partite case, the only physical reason for non-existence of
a noncontextual HV model is \emph{quantum nonlocality }\cite{8-} -- in
contrast to \emph{locality }of\emph{\ }parties' measurements argued by
Einstein, Podolsky and Rosen (EPR) in Ref. 3.

Ever since this Bell's conjecture, a HV model for a quantum correlation
scenario is called local and is referred to as a \emph{LHV} model if, in
this model, each random variable modelling a party's measurement depends
only on a setting of this measurement at the corresponding site.

If all possible $N$-partite joint von Neumann measurements an $N$-partite
quantum state admit \cite{werner} a single LHV model, then this $N$-partite
state is also referred to as \emph{local.}

An arbitrary $N$-partite quantum correlation scenario does not need to admit
a LHV model. Also, an arbitrary $N$-partite quantum state does not need to
be local. However, not only every separable quantum state is local -- Werner 
\cite{werner} presented the example of a nonseparable (entangled) quantum
state on $\mathbb{C}^{d}\otimes \mathbb{C}^{d},$ $d\geq 2$, which is local
under all bipartite joint von Neumann measurements on this state. Moreover,
there also exist \cite{10, terhal, loubenets2, loubenets2a, loubenets2b}
nonseparable $N$-partite quantum states, that behave themselves as \emph{%
local} under every $N$-partite quantum correlation scenario with some
specific number $S_{n}\leq S_{n}^{(0)}$ of quantum measurements at each $n$%
-th site.

Furthermore, as we proved in Refs. 10, 12 \emph{every }$N$\emph{-partite
quantum correlation scenario admits} \emph{a local quasi hidden variable
(LqHV) model}, described in Introduction.

In view of the above analysis of different settings of HV\ models available
now in the literature and the new quasi hidden variable (qHV) approach,
developed in Refs. 10, 12 and outlined in Introduction, we put the following
questions important for both -- quantum foundations and quantum information.$%
\medskip $

\noindent \textbf{1.} Does every $N$-partite quantum state admit a LqHV
model?$\medskip $

\noindent \textbf{2.} Can the Hilbert space description of all the von
Neumann joint probabilities and product averages be reproduced via a qHV
model where to each quantum observables there corresponds only one random
variable and a mapping $X\overset{\Phi }{\mapsto }f_{X},$ $f_{X}(\Omega )=%
\mathrm{sp}X,$ satisfies \emph{in average }both\emph{\ -- }the von Neumann 
\cite{1} linearity assumption and the Kochen-Specker assumptions (\ref{03}),
(\ref{02}),\emph{\ }in other words,\emph{\ }satisfies for each quantum state 
$\rho $ the average relations%
\begin{eqnarray}
\langle \varphi \circ X\rangle _{\rho } &=&\langle \varphi \circ
f_{X}\rangle _{qHV},  \label{st1} \\
\left\langle X_{1}+\cdots +X_{n}\right\rangle _{\rho } &=&\left\langle
f_{X_{1}}+\cdots +f_{X_{n}}\right\rangle _{qHV},  \label{st1'} \\
\left\langle X_{1}\cdot \ldots \cdot X_{n}\right\rangle _{\rho }
&=&\left\langle f_{X_{1}}\cdot \ldots \cdot f_{X_{n}}\right\rangle _{qHV},%
\text{ \ \ }n\in \mathbb{N},  \label{st2}
\end{eqnarray}%
where relation (\ref{st1}) holds for all quantum observables $X$ and all
bounded Borel functions $\varphi :\mathbb{R\rightarrow R};$ relation (\ref%
{st1'}) -- for all bounded quantum observables $X_{1},\ldots ,X_{n}$ on $%
\mathcal{H}$ and relation (\ref{st2}) is satisfied only for mutually
commuting bounded quantum observables $X_{1},\ldots ,X_{n}?\medskip $

\noindent \textbf{3.} Does there exist a qHV model correctly reproducing all
the von Neumann joint probabilities and product expectations and where (i) a
quantum observable $X$ can be represented by a variety $\{f_{X}^{(\theta )},$
$\theta \in \Theta _{X}\}$ of random variables on $(\Omega ,\mathcal{F}%
_{\Omega })$ satisfying the functional condition (\ref{05}) required in
quantum foundations, but each of these random variables \emph{equivalently }%
represents a quantum observable $X$ under all joint von Neumann
measurements, \emph{independently} of their measurement contexts; (ii) the
average relations (\ref{st1}) - (\ref{st2}) hold with arbitrary
representatives $f_{X_{1}}^{(\theta _{1})},\ldots ,f_{X_{n}}^{(\theta _{n})}$
on $(\Omega ,\mathcal{F}_{\Omega })$ of quantum observables $X_{1},\ldots
,X_{n}$ on $\mathcal{H}?$

In what follows, we answer \emph{positively} to all of these questions and
refer to the qHV\ model specified in questions 2, 3 as statistically
noncontextual and context-invariant, respectively (for terminology, see also
Introduction).

Note that, for $\dim $\textbf{$\mathcal{H}$ }$\geq 4$\textbf{, }a
statistically noncontextual HV model and a context-invariant HV\ model for
all the joint von Neumann measurements cannot exist\ -- since, otherwise, in
view of isometric isomorphism of complex separable Hilbert spaces of the
same dimension, this would imply the existence of a LHV model for each
two-qubit state, but, as it is well known, this is not \cite{4} the case.

\section{Von Neumann measurements}

In the frame of the von Neumann formalism \cite{1}, states and observables
of a quantum system are described, correspondingly, by density operators $%
\rho $ and self-adjoint linear operators $X\ $on a complex separable Hilbert
space $\mathcal{H},$ possibly infinite dimensional.

Let $\mathcal{L}_{\mathcal{H}}^{(s)}$ be the real vector space of all
self-adjoint bounded linear operators on $\mathcal{H}.$ Equipped with the
operator norm, this vector space is Banach. Denote by $\mathfrak{X}_{%
\mathcal{H}}\supset \mathcal{L}_{\mathcal{H}}^{(s)}$ the set of all quantum
observables on $\mathcal{H}$, bounded and unbounded, and by $\mathrm{sp}%
X\subseteq \mathbb{R}$ the spectrum of a quantum observable $X.$

The probability that, under an ideal (errorless) measurement of a quantum
observable $X\in \mathfrak{X}_{\mathcal{H}}$ in a state $\rho ,$ an observed
value belongs to a Borel subset $B$ of $\mathbb{R}$ is given \cite{1, reed,
davies} by the expression 
\begin{equation}
\mathrm{tr}[\rho \mathrm{P}_{X}(B)],\text{ \ }B\in \mathcal{B}_{\mathbb{R}},
\label{1}
\end{equation}%
where $\mathcal{B}_{\mathbb{R}}$ is the Borel $\sigma $-algebra \cite%
{dunford} on $\mathbb{R}$ and $\mathrm{P}_{X}$ is the spectral measure of an
observable $X\in \mathfrak{X}_{\mathcal{H}},$ that is, the normalized
projection-valued measure $\mathrm{P}_{X}$ on $\mathcal{B}_{\mathbb{R}}$,
uniquely corresponding to an observable $X$ due to the spectral theorem \cite%
{1, reed} 
\begin{equation}
X=\dint\limits_{\mathbb{R}}x\mathrm{P}_{X}(\mathrm{d}x).  \label{2}
\end{equation}%
The values $\mathrm{P}_{X}(B),$ $B\in \mathcal{B}_{\mathbb{R}},$ $\mathrm{P}%
_{X}(\mathbb{R})=\mathbb{I}_{\mathcal{H}},$ of this measure are projections
on $\mathcal{H}$, satisfying the relations 
\begin{eqnarray}
\mathrm{P}_{X}(B_{1})\mathrm{P}_{X}(B_{2}) &=&\mathrm{P}_{X}(B_{2})\mathrm{P}%
_{X}(B_{1})=\mathrm{P}_{X}(B_{1}\cap B_{2}),\text{ \ \ }B_{1},B_{2}\in 
\mathcal{B}_{\mathbb{R}},  \label{3} \\
\mathrm{P}_{X}(B) &=&0,\text{\ \ }B\notin \mathcal{B}_{\mathbb{R}}\cap 
\mathrm{sp}X.  \notag
\end{eqnarray}%
For each $X\in \mathfrak{X}_{\mathcal{H}},$ its spectrum $\mathrm{sp}X\in 
\mathcal{B}_{\mathbb{R}}.$ Due to the second relation in (\ref{3}), we
further consider the spectral measure $\mathrm{P}_{X},$ $X\in \mathfrak{X}_{%
\mathcal{H}},$ only on the trace $\sigma $-algebra 
\begin{equation}
\mathcal{B}_{\mathrm{sp}X}:=\mathcal{B}_{\mathbb{R}}\cap \mathrm{sp}X.
\end{equation}

The measure $\mathrm{P}_{X}$ is $\sigma $-additive in the strong operator
topology \cite{davies, 15} in $\mathcal{L}_{\mathcal{H}}^{(s)}$, that is:%
\begin{equation}
\lim_{n\rightarrow \infty }\left \Vert \mathrm{P}_{X}(\cup _{i=1}^{\infty
}B_{i})\psi -\sum_{i=1}^{n}\mathrm{P}_{X}(B_{i})\psi \right \Vert _{\mathcal{%
H}}=0  \label{4}
\end{equation}%
for all $\psi \in \mathcal{H}$ and all countable collections $\{B_{i}\}$ of
mutually disjoint sets in $\mathcal{B}_{\mathrm{sp}X}.$

\begin{remark}
In this article, we follow the terminology of Ref. 27. Namely, let $%
\mathfrak{B}$ be a Banach space and $\mathcal{F}_{\Lambda }$ be an algebra
of subsets of a set $\Lambda .$ We refer to an additive set function $%
\mathfrak{m}:\mathcal{F}_{\Lambda }\rightarrow \mathfrak{B}$ as a $\mathfrak{%
B}$-valued (finitely additive) measure on $\mathcal{F}_{\Lambda }$. If a
measure $\mathfrak{m}$ on $\mathcal{F}_{\Lambda }$ is $\sigma $-additive in
some topology in $\mathfrak{B},$ then we specify this in addition.
\end{remark}

An ideal measurement (\ref{1}) of a quantum observable $X$ in a state $\rho $
is generally referred to as \emph{the von Neumann measurement.}

The joint von Neumann measurement of several quantum observables $%
X_{1},...,X_{n}\in \mathfrak{X}_{\mathcal{H}}$ is possible \cite{1, davies}
if and only if all values of their spectral measures mutually commute: 
\begin{equation}
\lbrack \mathrm{P}_{X_{i_{1}}}(B_{i_{1}}),\mathrm{P}%
_{X_{i_{2}}}(B_{i_{2}})]=0,\text{ \ \ }B_{i}\in \mathcal{B}_{\mathrm{sp}%
X_{i}},\text{ \ \ }i=1,...,n.  \label{com}
\end{equation}%
and is described in this case by the projection-valued product measure \cite%
{1, davies}%
\begin{equation}
\mathrm{P}_{X_{1},...,X_{n}}(B):=\dint \limits_{(x_{1},...,x_{n})\in B}%
\mathrm{P}_{X_{1}}(\mathrm{d}x_{1})\cdot ...\cdot \mathrm{P}_{X_{n}}(\mathrm{%
d}x_{n}),\text{\ \ }B\in \mathcal{B}_{\mathrm{sp}X_{1}\times \cdots \times 
\mathrm{sp}X_{n}},  \label{5}
\end{equation}%
which is normalized $\mathrm{P}_{X_{1},...,X_{n}}(\mathrm{sp}X_{1}\times
\cdots \times \mathrm{sp}X_{n})=\mathbb{I}_{\mathcal{H}}$ and defined on the 
$\sigma $-algebra%
\begin{eqnarray}
\mathcal{B}_{\mathrm{sp}X_{1}\times \cdots \times \mathrm{sp}X_{n}} &:&=%
\mathcal{B}_{\mathbb{R}^{n}}\cap (\mathrm{sp}X_{1}\times \cdots \times 
\mathrm{sp}X_{n}) \\
&=&\mathcal{B}_{\mathrm{sp}X_{1}}\otimes \cdots \otimes \mathcal{B}_{\mathrm{%
sp}X_{n}}  \notag
\end{eqnarray}%
of Borel subsets of $\mathrm{sp}X_{1}\times \cdots \times \mathrm{sp}X_{n}.$

For bounded quantum observables $X_{1},...,X_{n}\in \mathcal{L}_{\mathcal{H}%
}^{(s)}$, condition (\ref{com}) is equivalent to mutual commutativity $%
[X_{i_{1}},X_{i_{2}}]=0,$ $i=1,...,n$. of these observables. Therefore, for
short, we further refer to arbitrary quantum observables $X_{1},...,X_{n}$,
bounded or unbounded, as mutually commuting if their spectral measures
satisfy condition (\ref{com}). The measure $\mathrm{P}_{X_{1},...,X_{n}}$ is
also referred\ \cite{15} to as the \emph{joint spectral measure} of mutually
commuting quantum observables $X_{1},...,X_{n}.$

The expression 
\begin{equation}
\mathrm{tr}[\rho \mathrm{P}_{X_{1},...,X_{n}}(B_{1}\times \ldots \times
B_{n})]=\mathrm{tr}[\rho (\mathrm{P}_{X_{1}}(B_{1})\cdot ...\cdot \mathrm{P}%
_{X_{n}}(B_{n}))]  \label{6_}
\end{equation}%
defines the probability that the observed values of mutually commuting
quantum observables $X_{1},...,X_{n}\ $are in sets $B_{1}\in \mathcal{B}_{%
\mathrm{sp}X_{1}},...,B_{n}\in \mathcal{B}_{\mathrm{sp}X_{n}},$ respectively.

\section{Symmetrized products of spectral measures}

In this section, for our further consideration in section V, we introduce a
new operator-valued measure induced by the symmetrized product of spectral
measures of quantum observables and prove the extension theorem for the
consistent family of these operator-valued measures.

For an $n$-tuple $(X_{1},...,X_{n})$ of arbitrary mutually non-equal quantum
observables $X_{1},...,X_{n}$ $\in \mathfrak{X}_{\mathcal{H}},$ let $%
\mathcal{F}_{\mathrm{sp}X_{1}\times \cdots \times \mathrm{sp}X_{n}}$ be the 
\emph{product algebra }on $\mathrm{sp}X_{1}\times \cdots \times \mathrm{sp}%
X_{n}\subseteq \mathbb{R}^{n},$ that is, the algebra generated by all
rectangles $B_{1}\times \cdots \times B_{n}$ $\subseteq \mathrm{sp}%
X_{1}\times \cdots \times \mathrm{sp}X_{n}$ with measurable sides $B_{i}\in 
\mathcal{B}_{\mathrm{sp}X_{i}}.$

Let 
\begin{equation}
\mathcal{P}_{(X_{1},...,X_{n})}:\mathcal{F}_{\mathrm{sp}X_{1}\times \cdots
\times \mathrm{sp}X_{n}}\rightarrow \mathcal{L}_{\mathcal{H}}^{(s)},\text{ \
\ \ \ \ \ }\mathcal{P}_{(X_{1},...,X_{n})}(\mathrm{sp}X_{1}\times \cdots
\times \mathrm{sp}X_{n})=\mathbb{I}_{\mathcal{H}},  \label{me}
\end{equation}%
be the normalized (finitely additive) product measure on $\mathcal{F}_{%
\mathrm{sp}X_{1}\times \cdots \times \mathrm{sp}X_{n}},$ defined uniquely
via its representation%
\begin{equation}
\mathcal{P}_{(X_{1},...,X_{n})}(B_{1}\times \cdots \times B_{n})=\frac{1}{n!}%
\left \{ \mathrm{P}_{X_{1}}(B_{1})\cdot \ldots \cdot \mathrm{P}%
_{X_{n}}(B_{n})\right \} _{\mathrm{sym}}  \label{9_}
\end{equation}%
on all rectangles $B_{1}\times \cdots \times B_{n}$ with $B_{i}\in \mathcal{B%
}_{\mathrm{sp}X_{i}}$. The values of the measure $\mathcal{P}%
_{(X_{1},...,X_{n})}$ are self-adjoint bounded linear operators on $\mathcal{%
H}.$ Here, notation $\{Z_{1}\cdot \ldots \cdot Z_{n}\}_{\mathrm{sym}}$ means
the sum constituting the symmetrization of the operator product $Z_{1}\cdot
\ldots \cdot Z_{n},$ where $Z_{i}\in \mathcal{L}_{\mathcal{H}}^{(s)}$, with
respect to all permutations of its factors.

For a collection $\{X_{1},...,X_{n}\}$ of mutually commuting quantum
observables, the measure $\mathcal{P}_{(X_{1},...,X_{n})}$ is
projection-valued and constitutes the restriction to the product algebra $%
\mathcal{F}_{\mathrm{sp}X_{1}\times \cdots \times \mathrm{sp}X_{n}}$ of the
joint spectral measure $\mathrm{P}_{X_{1},...,X_{n}}$ defined by relation (%
\ref{5}) on the $\sigma $-algebra $\mathcal{B}_{\mathrm{sp}X_{1}\times
\cdots \times \mathrm{sp}X_{n}}$.

If each observable $X_{i}$ in a collection $\{X_{1},...,X_{n}\}$ is bounded,
i.e. $X_{i}\in \mathcal{L}_{\mathcal{H}}^{(s)}$, and, moreover, has only a
discrete spectrum $\mathrm{sp}X_{i}=\{x_{i}^{(k)}\in \mathbb{R},$ $%
k=1,...,K_{X_{i}}<\infty \},$ where each $x_{i}^{(k)}$ is an eigenvalue of $%
X_{i}$, then the product algebra $\mathcal{F}_{\mathrm{sp}X_{1}\times \cdots
\times \mathrm{sp}X_{n}}$ is finite and coincides with the $\sigma $-algebra 
$\mathcal{B}_{\mathrm{sp}X_{1}\times \cdots \times \mathrm{sp}X_{n}}$ while
the product measure $\mathcal{P}_{(X_{1},...,X_{n})}$ takes the form 
\begin{equation}
\mathcal{P}_{(X_{1},...,X_{n})}(F):=\frac{1}{n!}\dsum%
\limits_{(x_{1},...,x_{n})\in F}\left \{ \mathrm{P}_{X_{1}}(\{x_{1}\})\cdot
\ldots \cdot \mathrm{P}_{X_{n}}(\{x_{n}\})\right \} _{\mathrm{sym}}
\label{9}
\end{equation}%
for all $F\in \mathcal{F}_{\mathrm{sp}X_{1}\times \cdots \times \mathrm{sp}%
X_{n}}.$

Consider the family%
\begin{equation}
\{ \mathcal{P}_{(X_{1},...,X_{n})}\mid \{X_{1},...,X_{n}\} \subset \mathfrak{%
X}_{\mathcal{H}},\text{ }n\in \mathbb{N}\}  \label{12}
\end{equation}%
of all normalized $\mathcal{L}_{\mathcal{H}}^{(s)}$-valued measures (\ref{me}%
), each specified by a tuple $(X_{1},...,X_{n})$ of mutually non-equal
quantum observables on $\mathcal{H}$. These measures satisfy the following
consistency relations proved in appendix A.

\begin{lemma}
For every collection $\{X_{1},...,X_{n}\}$ $\subset \mathfrak{X}_{\mathcal{H}%
},$ $n\in \mathbb{N},$ of quantum observables on $\mathcal{H},$ the relation 
\begin{eqnarray}
\mathcal{P}_{(X_{1},...,X_{n})}(B_{1}\times \cdots \times B_{n}) &=&\mathcal{%
P}_{(X_{i_{1}},...,X_{i_{_{n}}})}(B_{i_{1}}\times \cdots \times B_{i_{n}}),
\label{13} \\
B_{1} &\in &\mathcal{B}_{\mathrm{sp}X_{1}},...,B_{n}\in \mathcal{B}_{\mathrm{%
sp}X_{n}},  \notag
\end{eqnarray}%
holds for all permutations $\binom{1,,...,n}{i_{1},...,i_{n}}$ of indexes
and the relation 
\begin{eqnarray}
&&\mathcal{P}_{(X_{1},...,X_{_{n}})}\left( \{\text{ }(x_{1},...,x_{n})\in 
\mathrm{sp}X_{1}\times \cdots \times \mathrm{sp}X_{n}\mid
(x_{i_{1}},...,x_{i_{k}})\in F\}\right)  \label{14} \\
&=&\mathcal{P}_{(X_{i_{1}},...,X_{i_{k}})}\left( F\right) ,\text{ \ \ \ \ \ }%
F\in \mathcal{F}_{\mathrm{sp}X_{i_{1}}\times \cdots \times \mathrm{sp}%
X_{i_{k}}},  \notag
\end{eqnarray}%
is fulfilled for each $\{X_{i_{1}},...,X_{i_{k}}\}\subseteq
\{X_{1},...,X_{n}\}.\medskip $
\end{lemma}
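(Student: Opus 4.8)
The plan is to read off both consistency relations directly from the defining formula (\ref{9_}), using only the algebraic structure of the symmetrization $\{\cdot\}_{\mathrm{sym}}$ and the normalization $\mathrm{P}_{X_i}(\mathrm{sp}X_i)=\mathbb{I}_{\mathcal{H}}$. Throughout I rely on the facts, recorded in the text preceding the lemma, that each $\mathcal{P}_{(X_1,\dots,X_n)}$ is the \emph{unique} finitely additive $\mathcal{L}_{\mathcal{H}}^{(s)}$-valued extension to the product algebra $\mathcal{F}_{\mathrm{sp}X_1\times\cdots\times\mathrm{sp}X_n}$ of the set function prescribed by (\ref{9_}) on measurable rectangles, together with the standard fact that the rectangles form a semiring generating this algebra, so that every algebra element is a finite disjoint union of rectangles. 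Consequently, an identity between two finitely additive $\mathcal{L}_{\mathcal{H}}^{(s)}$-valued measures on such an algebra holds as soon as it is verified on rectangles; no $\sigma$-additivity, boundedness, or commutativity of the $X_i$ enters.

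For the permutation relation (\ref{13}): by the meaning of symmetrization, $\{\mathrm{P}_{X_1}(B_1)\cdot\ldots\cdot\mathrm{P}_{X_n}(B_n)\}_{\mathrm{sym}}=\sum_{\sigma}\mathrm{P}_{X_{\sigma(1)}}(B_{\sigma(1)})\cdot\ldots\cdot\mathrm{P}_{X_{\sigma(n)}}(B_{\sigma(n)})$, the sum running over all permutations $\sigma$ of $\{1,\dots,n\}$. Applying any fixed relabelling $(X_i,B_i)\mapsto(X_{i_j},B_{i_j})$ of the paired factors merely reindexes this finite sum over $S_n$ and hence leaves the right-hand side of (\ref{9_}) unchanged; since the $X_i$ are assumed mutually non-equal, $(X_{i_1},\dots,X_{i_n})$ is again an admissible tuple and the measure $\mathcal{P}_{(X_{i_1},\dots,X_{i_n})}$ is well posed. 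This gives (\ref{13}) on rectangles, hence everywhere.

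For the marginalization relation (\ref{14}): by (\ref{13}) I may assume $(i_1,\dots,i_k)=(1,\dots,k)$, so that the set on the left of (\ref{14}) equals $\pi^{-1}(F)$ for the coordinate projection $\pi(x_1,\dots,x_n)=(x_1,\dots,x_k)$. Because $\pi^{-1}$ sends a rectangle $B_1\times\cdots\times B_k$ to the rectangle $B_1\times\cdots\times B_k\times\mathrm{sp}X_{k+1}\times\cdots\times\mathrm{sp}X_n$ and commutes with complements and finite unions, it is a Boolean homomorphism into $\mathcal{F}_{\mathrm{sp}X_1\times\cdots\times\mathrm{sp}X_n}$ preserving disjointness; hence $F\mapsto\mathcal{P}_{(X_1,\dots,X_n)}(\pi^{-1}(F))$ is a finitely additive $\mathcal{L}_{\mathcal{H}}^{(s)}$-valued measure in $F$, and it suffices to take $F=B_1\times\cdots\times B_k$. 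Then, using (\ref{9_}) and $\mathrm{P}_{X_j}(\mathrm{sp}X_j)=\mathbb{I}_{\mathcal{H}}$ for $j>k$, one gets $\mathcal{P}_{(X_1,\dots,X_n)}(\pi^{-1}(F))=\tfrac{1}{n!}\{\mathrm{P}_{X_1}(B_1)\cdot\ldots\cdot\mathrm{P}_{X_k}(B_k)\cdot\mathbb{I}_{\mathcal{H}}\cdot\ldots\cdot\mathbb{I}_{\mathcal{H}}\}_{\mathrm{sym}}$ with $n-k$ identity factors. The one genuine computation is then the elementary identity $\{Z_1\cdot\ldots\cdot Z_k\cdot\mathbb{I}_{\mathcal{H}}\cdot\ldots\cdot\mathbb{I}_{\mathcal{H}}\}_{\mathrm{sym}}=\tfrac{n!}{k!}\{Z_1\cdot\ldots\cdot Z_k\}_{\mathrm{sym}}$, valid for any $Z_1,\dots,Z_k\in\mathcal{L}_{\mathcal{H}}^{(s)}$: in the sum over $S_n$, deleting the identity factors sends each $\sigma\in S_n$ to the permutation $\tau\in S_k$ recording the relative order of $Z_1,\dots,Z_k$, and a given $\tau$ is produced by exactly $\binom{n}{k}(n-k)!=n!/k!$ permutations $\sigma$ (choose the $k$ slots that carry the $Z$'s, fill them in the unique order realizing $\tau$, and place the identities in the remaining $n-k$ slots in any order). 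Substituting yields $\mathcal{P}_{(X_1,\dots,X_n)}(\pi^{-1}(F))=\tfrac{1}{k!}\{\mathrm{P}_{X_1}(B_1)\cdot\ldots\cdot\mathrm{P}_{X_k}(B_k)\}_{\mathrm{sym}}=\mathcal{P}_{(X_1,\dots,X_k)}(F)$, which is (\ref{14}).

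I expect no conceptual obstacle here: the content is entirely algebraic/combinatorial, and the only care required is measure-theoretic bookkeeping — namely, that reducing each claimed identity to its rectangle form is legitimate (the product algebra is generated by its semiring of rectangles, and the finitely additive extension in (\ref{me})–(\ref{9_}) is unique, so rectangle-level identities propagate) and that the pullback $\mathcal{P}_{(X_1,\dots,X_n)}\circ\pi^{-1}$ is again a finitely additive measure. With these routine facts, (\ref{13}) is immediate and (\ref{14}) follows from the symmetrization count above.
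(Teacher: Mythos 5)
Your proof is correct and follows essentially the same route as the paper's: verify both identities on measurable rectangles using the definition (\ref{9_}), the normalization $\mathrm{P}_{X_i}(\mathrm{sp}X_i)=\mathbb{I}_{\mathcal{H}}$, and the symmetrization identity $\{Z_1\cdot\ldots\cdot Z_k\cdot\mathbb{I}_{\mathcal{H}}\cdot\ldots\cdot\mathbb{I}_{\mathcal{H}}\}_{\mathrm{sym}}=\tfrac{n!}{k!}\{Z_1\cdot\ldots\cdot Z_k\}_{\mathrm{sym}}$, then propagate to the whole product algebra by finite additivity and the disjoint-rectangle decomposition. The only cosmetic differences are that the paper first treats the discrete-spectrum case as a warm-up and leaves the $n!/k!$ count implicit, whereas you go directly to the general case and make the count explicit.
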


Note that, for the operator-valued measures $\{\mathcal{P}%
_{(X_{1},...,X_{_{n}})}\}$, relations (\ref{13}), (\ref{14}) are quite
similar by their form to the Kolmogorov consistency conditions \cite{7, 17}
for a family%
\begin{equation}
\{\mu _{(t_{1},...,t_{n})}:\mathcal{B}_{\mathbb{R}^{n}}\rightarrow \lbrack
0,1]\text{ }\mid \{t_{1},...,t_{n}\}\subset T,\ \ n\in \mathbb{N}\}
\label{15}
\end{equation}%
of probability measures $\mu _{(t_{1},...,t_{n})}$, each specified by a
tuple $(t_{1},...,t_{n})$ of mutually non-equal elements in an index set $T.$

In view of this similarity and for our further consideration, in appendix B,
we generalize to the case of consistent operator-valued measures some items
of the Kolmogorov extension theorem \cite{7, 17} for consistent probability
measures (\ref{15}).

\subsection{The extension theorem}

Denote by $\Lambda :=\dprod \limits_{X\in \text{ }\mathfrak{X}_{\mathcal{H}}}%
\mathrm{sp}X$ the Cartesian product of the spectrums of all the quantum
observables on $\mathcal{H}.$ By its definition \cite{dunford}, $\Lambda $
is the set of all real-valued functions 
\begin{equation}
\lambda :\mathfrak{X}_{\mathcal{H}}\rightarrow \cup _{X\in \mathfrak{X}_{%
\mathcal{H}}}\mathrm{sp}X
\end{equation}%
with values $\lambda (X)\equiv \lambda _{X}\in \mathrm{sp}X.$

Let the random variable $\pi _{(X_{1},...,X_{n})}:\Lambda \rightarrow 
\mathrm{sp}X_{1}\times \cdots \times \mathrm{sp}X_{n}$ be the canonical
projection on $\Lambda :$%
\begin{eqnarray}
\pi _{(X_{1},...,X_{n})}(\lambda ) &:&=\left( \pi _{X_{1}}(\lambda ),...,\pi
_{X_{n}}(\lambda )\right) ,  \label{canon} \\
\pi _{X}(\lambda ) &:&=\lambda _{X}\in \mathrm{sp}X.  \notag
\end{eqnarray}%
The set%
\begin{equation}
\mathcal{A}_{\Lambda }=\left \{ \pi _{(X_{1},...,X_{n})}^{-1}(F)\subseteq
\Lambda \mid F\in \mathcal{F}_{\mathrm{sp}X_{1}\times \cdots \times \mathrm{%
sp}X_{n}},\text{\ \ }\{X_{1},...,X_{n}\} \subset \mathfrak{X}_{\mathcal{H}},%
\text{\ \ }n\in \mathbb{N}\right \}  \label{sigma}
\end{equation}%
of all cylindrical subsets of $\Lambda $ of the form 
\begin{equation}
\pi _{(X_{1},...,X_{n})}^{-1}(F):=\left \{ \lambda \in \Lambda \mid (\pi
_{X_{1}}(\lambda ),...,\pi _{X_{n}}(\lambda ))\in F\right \} ,  \label{f}
\end{equation}%
constitutes an algebra on $\Lambda $ (proposition III.11.18 in Ref. 25).

Due to lemma 3 proved in appendix B and generalizing some items of the
Kolmogorov extension theorem to the case of consistent operator-valued
measures, we have the following statement.

\begin{theorem}[The extension theorem]
Let $\mathcal{H}$ be a complex separable Hilbert space. For family (\ref{12}%
) of finitely additive $\mathcal{L}_{\mathcal{H}}^{(s)}$-valued measures $%
\mathcal{P}_{(X_{1},...,X_{n})}:\mathcal{F}_{\mathrm{sp}X_{1}\times \cdots
\times \mathrm{sp}X_{n}}\rightarrow \mathcal{L}_{\mathcal{H}}^{(s)},$ there
exists a unique normalized finitely additive $\mathcal{L}_{\mathcal{H}%
}^{(s)} $-valued measure 
\begin{equation}
\mathbb{M}:\mathcal{A}_{\Lambda }\rightarrow \mathcal{L}_{\mathcal{H}%
}^{(s)},\ \ \ \ \ \ \mathbb{M(}\Lambda )=\mathbb{I}_{\mathcal{H}},
\label{19}
\end{equation}%
on $(\Lambda ,\mathcal{A}_{\Lambda })$ such that%
\begin{equation}
\mathcal{P}_{(X_{1},...,X_{n})}(F)=\mathbb{M}\left( \pi
_{(X_{1},...,X_{n})}^{-1}(F)\right) ,\text{ \ \ }F\in \mathcal{F}_{\mathrm{sp%
}X_{1}\times \cdots \times \mathrm{sp}X_{n}},  \label{20}
\end{equation}%
in particular,%
\begin{eqnarray}
\frac{1}{n!}\left \{ \mathrm{P}_{X_{1}}(B_{1})\cdot \ldots \cdot \mathrm{P}%
_{X_{n}}(B_{n})\right \} _{\mathrm{sym}} &=&\mathbb{M(}\pi
_{X_{1}}^{-1}(B_{1})\cap \cdots \cap \pi _{X_{n}}^{-1}(B_{n})),  \label{21}
\\
B_{i} &\in &\mathcal{B}_{\mathrm{sp}X_{i}},\text{ }i=1,...,n,  \notag
\end{eqnarray}%
for all collections $\{X_{1},...,X_{n}\} \subset \mathfrak{X}_{\mathcal{H}},$
$n\in \mathbb{N},$ of quantum observables on $\mathcal{H}.$
\end{theorem}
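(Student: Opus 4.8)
The plan is to deduce Theorem~1 from Lemma~2 (the consistency relations) together with Lemma~3, the operator-valued generalization of the Kolmogorov extension theorem that is proved in appendix~B. The structure mirrors the classical construction of a stochastic process from its finite-dimensional distributions, with $[0,1]$-valued probability measures replaced by $\mathcal{L}_{\mathcal{H}}^{(s)}$-valued normalized finitely additive measures, and the index set $T$ replaced by $\mathfrak{X}_{\mathcal{H}}$.

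First I would set up the measure $\mathbb{M}$ on the algebra $\mathcal{A}_{\Lambda}$ of cylinder sets. Every element of $\mathcal{A}_{\Lambda}$ has the form $\pi_{(X_{1},\dots,X_{n})}^{-1}(F)$ for some finite tuple of mutually non-equal observables and some $F\in\mathcal{F}_{\mathrm{sp}X_{1}\times\cdots\times\mathrm{sp}X_{n}}$; the natural definition is $\mathbb{M}(\pi_{(X_{1},\dots,X_{n})}^{-1}(F)):=\mathcal{P}_{(X_{1},\dots,X_{n})}(F)$, which forces both \eqref{20} and, specializing $F=B_{1}\times\cdots\times B_{n}$ and using \eqref{9_}, also \eqref{21}. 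The crucial point is that this prescription is \emph{well defined}: a single cylinder set admits many representations — one may permute the observables, enlarge the defining tuple by adjoining extra observables, or delete observables on which $F$ imposes no constraint. Well-definedness under all of these is exactly what the two consistency relations \eqref{13} and \eqref{14} of Lemma~2 guarantee; relation \eqref{13} handles permutations and relation \eqref{14} handles the passage between a tuple and any of its sub-tuples (and, read in reverse, the adjunction of dummy coordinates). So the argument here is: given two cylinder representations, embed both into a common refining tuple containing all the observables involved, apply \eqref{13} and \eqref{14} to rewrite each $\mathcal{P}$-value in terms of the common tuple, and observe that the two resulting expressions coincide.

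Next I would check that $\mathbb{M}:\mathcal{A}_{\Lambda}\to\mathcal{L}_{\mathcal{H}}^{(s)}$ is a normalized finitely additive measure. Normalization $\mathbb{M}(\Lambda)=\mathbb{I}_{\mathcal{H}}$ follows from $\Lambda=\pi_{X}^{-1}(\mathrm{sp}X)$ for any single $X$ together with $\mathcal{P}_{(X)}(\mathrm{sp}X)=\mathbb{I}_{\mathcal{H}}$. For finite additivity, given finitely many disjoint cylinder sets whose union is again a cylinder set, one again lifts every set involved to a single common tuple $(X_{1},\dots,X_{m})$ of observables (the union of all the observables appearing); under $\pi_{(X_{1},\dots,X_{m})}^{-1}$, disjointness and union in $\Lambda$ correspond to disjointness and union in $\mathcal{F}_{\mathrm{sp}X_{1}\times\cdots\times\mathrm{sp}X_{m}}$ — because each $\pi_{(X_{1},\dots,X_{m})}$ is a surjection, so preimages reflect Boolean operations faithfully — and then finite additivity of the single fixed product measure $\mathcal{P}_{(X_{1},\dots,X_{m})}$ gives the claim. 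Uniqueness of $\mathbb{M}$ is immediate: any normalized finitely additive $\mathcal{L}_{\mathcal{H}}^{(s)}$-valued measure on $\mathcal{A}_{\Lambda}$ satisfying \eqref{20} must agree with our $\mathbb{M}$ on every cylinder set, and cylinder sets exhaust $\mathcal{A}_{\Lambda}$.

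I expect the main obstacle — or rather the point that needs the most care, and the reason appendix~B is invoked rather than a bare citation to Kolmogorov — to be the bookkeeping of the common-refinement step in an operator-valued, merely finitely additive setting with an uncountable index set $\mathfrak{X}_{\mathcal{H}}$. One must be careful that observables be mutually \emph{non-equal} in the tuples (so that $\mathcal{F}_{\mathrm{sp}X_{1}\times\cdots\times\mathrm{sp}X_{n}}$ and the measures $\mathcal{P}_{(X_{1},\dots,X_{n})}$ are defined as stated), which is why relation \eqref{14} with its sub-tuple formulation, rather than a naive "insert a diagonal copy" move, is the right tool; and that no $\sigma$-additivity or countability is silently used — the theorem asserts only finite additivity on the algebra $\mathcal{A}_{\Lambda}$, not a $\sigma$-additive extension to the generated $\sigma$-algebra, so the classical topological compactness arguments of the Kolmogorov theorem are not needed and indeed not available here. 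Once Lemma~3 is in hand the proof is essentially the verification that the hypotheses of Lemma~3 are met by family \eqref{12}, which is precisely the content of Lemma~2, followed by reading off \eqref{20} and \eqref{21} from the conclusion of Lemma~3.
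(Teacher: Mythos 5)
Your proposal is correct and follows essentially the same route as the paper: the paper's proof of Theorem~1 likewise consists of identifying the family $\{\mathcal{P}_{(X_{1},...,X_{n})}\}$ as an instance of the general family in appendix~B, invoking the consistency relations (the paper's Lemma~1, which you call Lemma~2) to verify the hypotheses of the appendix-B extension lemma, and reading off (\ref{20}) and (\ref{21}). Your sketch of the well-definedness on cylinder sets, finite additivity via a common refining tuple, normalization, and uniqueness reproduces the content of the paper's proof of Lemma~3 in appendix~B.
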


\begin{proof}
The family $\{\mathcal{P}_{(X_{1},...,X_{n})}\}$ represents a particular
example of a general family (\ref{4_}) if, in the latter, we replace 
\begin{eqnarray}
T &\rightarrow &\mathfrak{X}_{\mathcal{H}},\text{ \ \ }\Lambda
_{t}\rightarrow \mathrm{sp}X,\text{ \ \ }\widetilde{\Lambda }\rightarrow
\Lambda \\
\mathcal{F}_{t} &\rightarrow &\mathcal{B}_{\mathrm{sp}X},\text{ \ \ \ }%
\mathcal{F}_{\Lambda _{t_{1}}\times \cdots \times \Lambda
_{t_{n}}}\rightarrow \mathcal{F}_{\mathrm{sp}X_{1}\times \cdots \times 
\mathrm{sp}X_{n}}.  \notag
\end{eqnarray}%
Moreover, by lemma 1, the family $\{\mathcal{P}_{(X_{1},...,X_{n})}\}$
satisfies the consistency conditions (\ref{5_}), (\ref{6__}). Therefore,
representation (\ref{20}) follows explicitly from relation (\ref{8_}) in
lemma 3 of appendix B. This proves the statement. \medskip
\end{proof}

Theorem 1 allows us to express all real-valued measures%
\begin{equation}
\mathrm{tr}[\rho \mathcal{P}_{(X_{1},...,X_{n})}(\cdot )],\text{ \ \ }%
\{X_{1},...,X_{n}\} \subset \mathfrak{X}_{\mathcal{H}},\text{ \ }n\in 
\mathbb{N},
\end{equation}%
for a quantum state $\rho $ on $\mathcal{H}$ via a single real-valued
measure on the algebra $\mathcal{A}_{\Lambda }.$

\begin{proposition}
Let $\mathcal{H}$ be a complex separable Hilbert space and $\{ \mathcal{P}%
_{(X_{1},...,X_{n})}\}$ be family (\ref{12}) of $\mathcal{L}_{\mathcal{H}%
}^{(s)}$-valued measures (\ref{me}). To every state $\rho $ on $\mathcal{H}$%
, there corresponds ($\rho \overset{\mathfrak{R}}{\mapsto }\mu _{\rho })$ a
unique normalized finitely additive real-valued measure%
\begin{equation}
\mu _{\rho }:\mathcal{A}_{\Lambda }\rightarrow \mathbb{R},\text{ \ \ }\mu
_{\rho }(\Lambda )=1,  \label{22}
\end{equation}%
on $(\Lambda ,\mathcal{A}_{\Lambda })$ such that 
\begin{equation}
\mathrm{tr}[\rho \mathcal{P}_{(X_{1},...,X_{n})}(F)]=\mu _{\rho }\left( \pi
_{(X_{1},...,X_{n})}^{-1}(F)\right) ,\text{ \ \ }F\in \mathcal{F}_{\mathrm{sp%
}X_{1}\times \cdots \times \mathrm{sp}X_{n}},  \label{23}
\end{equation}%
in particular,%
\begin{eqnarray}
\frac{1}{n!}\mathrm{tr}[\rho \{ \mathrm{P}_{X_{1}}(B_{1})\cdot \ldots \cdot 
\mathrm{P}_{X_{n}}(B_{n})\}_{\mathrm{sym}}] &=&\mu _{\rho }\left( \pi
_{X_{1}}^{-1}(B_{1})\cap \cdots \cap \pi _{X_{n}}^{-1}(B_{n})\right) ,
\label{24} \\
B_{i} &\in &\mathcal{B}_{\mathrm{sp}X_{i}},\text{ \ }i=1,...,n,  \notag
\end{eqnarray}%
for all collections $\{X_{1},...,X_{n}\} \subset \mathfrak{X}_{\mathcal{H}},$
$n\in \mathbb{N},$ of quantum observables on $\mathcal{H}$. If $\rho _{j}%
\overset{\mathfrak{R}}{\mapsto }\mu _{\rho _{j}},$ $j=1,...,m,$ then 
\begin{equation}
\sum \alpha _{j}\rho _{j}\overset{\mathfrak{R}}{\mapsto }\sum \alpha _{j}\mu
_{\rho _{j}},\text{ \ \ \ }\alpha _{j}>0,\text{ \ }\sum \alpha _{j}=1.
\label{ma}
\end{equation}
\end{proposition}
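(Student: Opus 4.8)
The plan is to obtain the measure $\mu_\rho$ directly from the operator-valued measure $\mathbb{M}$ of Theorem~1 by composing with the linear functional $Z\mapsto\mathrm{tr}[\rho Z]$. First I would set $\mu_\rho(A):=\mathrm{tr}[\rho\,\mathbb{M}(A)]$ for every $A\in\mathcal{A}_\Lambda$. Since $\mathbb{M}$ is an $\mathcal{L}_{\mathcal{H}}^{(s)}$-valued finitely additive measure on the algebra $\mathcal{A}_\Lambda$ and $Z\mapsto\mathrm{tr}[\rho Z]$ is a (bounded) linear functional on $\mathcal{L}_{\mathcal{H}}^{(s)}$, the set function $\mu_\rho$ is finitely additive and real-valued; normalization $\mu_\rho(\Lambda)=\mathrm{tr}[\rho\,\mathbb{M}(\Lambda)]=\mathrm{tr}[\rho\,\mathbb{I}_{\mathcal{H}}]=1$ is immediate. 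Applying $\mathrm{tr}[\rho\,\cdot\,]$ to the identities (\ref{20}) and (\ref{21}) of Theorem~1 then yields (\ref{23}) and, as the special case of rectangles, (\ref{24}). Finally (\ref{ma}) follows from linearity of the trace in $\rho$: $\mu_{\sum\alpha_j\rho_j}(A)=\mathrm{tr}[(\sum\alpha_j\rho_j)\mathbb{M}(A)]=\sum\alpha_j\mathrm{tr}[\rho_j\mathbb{M}(A)]=\sum\alpha_j\mu_{\rho_j}(A)$, so $\mathfrak{R}$ is affine on the state space.

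The one genuinely substantive point is \emph{uniqueness}: I must show that $\mu_\rho$ is the only normalized finitely additive real-valued measure on $(\Lambda,\mathcal{A}_\Lambda)$ satisfying (\ref{23}). This is where I expect the main (though still modest) obstacle to lie, and the argument is that $\mathcal{A}_\Lambda$ is \emph{generated} by the cylinder sets $\pi_{(X_1,\dots,X_n)}^{-1}(F)$ with $F$ in the product algebra $\mathcal{F}_{\mathrm{sp}X_1\times\cdots\times\mathrm{sp}X_n}$ — indeed by (\ref{sigma}) every element of $\mathcal{A}_\Lambda$ \emph{is} such a cylinder set, because finitely many cylinders over possibly different finite tuples of observables can always be rewritten as cylinders over one common enlarged tuple (using the consistency relations (\ref{13}),(\ref{14}) of Lemma~1 to pull everything back to the joint index set). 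Hence any measure satisfying (\ref{23}) is completely prescribed on all of $\mathcal{A}_\Lambda$, forcing it to equal $\mu_\rho$. I would spell out just enough of this ``common refinement'' bookkeeping to justify that every $A\in\mathcal{A}_\Lambda$ has the form $\pi_{(X_1,\dots,X_n)}^{-1}(F)$ for a suitable tuple and $F$, after which uniqueness is a one-line consequence of (\ref{23}) holding for that representation.

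Two routine verifications remain to be recorded. First, well-definedness of $\mu_\rho$ as a function on $\mathcal{A}_\Lambda$: a given cylinder set admits many representations $\pi_{(X_1,\dots,X_n)}^{-1}(F)=\pi_{(Y_1,\dots,Y_m)}^{-1}(G)$, and one must check that $\mathrm{tr}[\rho\,\mathcal{P}_{(X_1,\dots,X_n)}(F)]=\mathrm{tr}[\rho\,\mathcal{P}_{(Y_1,\dots,Y_m)}(G)]$; but this is inherited from the corresponding operator identity $\mathbb{M}(\pi_{(X_1,\dots,X_n)}^{-1}(F))=\mathbb{M}(\pi_{(Y_1,\dots,Y_m)}^{-1}(G))$ already guaranteed by Theorem~1, so nothing new is needed. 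Second, finite additivity: if $A_1,A_2\in\mathcal{A}_\Lambda$ are disjoint, bring them to cylinders over a common tuple, use finite additivity of $\mathcal{P}_{(X_1,\dots,X_n)}$ together with linearity of $\mathrm{tr}[\rho\,\cdot\,]$. I will present the proof in the order: (1) define $\mu_\rho$ via $\mathrm{tr}[\rho\,\mathbb{M}(\cdot)]$ and record well-definedness, finite additivity, normalization; (2) derive (\ref{23}) and (\ref{24}) by applying the trace functional to Theorem~1; (3) prove uniqueness through the common-refinement argument; (4) deduce the affinity (\ref{ma}).
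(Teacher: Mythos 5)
Your proposal is correct and follows essentially the same route as the paper: define $\mu_\rho(A):=\mathrm{tr}[\rho\,\mathbb{M}(A)]$ using the measure $\mathbb{M}$ of Theorem~1, inherit normalization and finite additivity from $\mathbb{M}$ and linearity of the trace, read off (\ref{23})--(\ref{24}) by applying the trace functional to (\ref{20})--(\ref{21}), and obtain (\ref{ma}) from linearity in $\rho$. Your uniqueness argument --- that every element of $\mathcal{A}_\Lambda$ is itself a cylinder set $\pi_{(X_1,\dots,X_n)}^{-1}(F)$, so (\ref{23}) already prescribes any candidate measure on all of $\mathcal{A}_\Lambda$ --- is actually spelled out more explicitly than the paper's terse appeal to the uniqueness of $\mathbb{M}$, and is the right way to close that step.
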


\begin{proof}
For a state $\rho $ on $\mathcal{H},$ representation (\ref{20}) implies%
\begin{equation}
\mathrm{tr}[\rho \mathcal{P}_{(X_{1},...,X_{n})}(F)]=\mathrm{tr}[\rho 
\mathbb{M}\left( \pi _{(X_{1},...,X_{n})}^{-1}(F)\right) ]  \label{25}
\end{equation}%
for all sets $F\in \mathcal{F}_{\mathrm{sp}X_{1}\times \cdots \times \mathrm{%
sp}X_{n}}$ and all finite collections $\{X_{1},...,X_{n}\} \subset \mathfrak{%
X}_{\mathcal{H}}.$ Introduce on the algebra $\mathcal{A}_{\Lambda }$ the set
function 
\begin{equation}
\mu _{\rho }\left( A\right) :=\mathrm{tr}[\rho \mathbb{M}\left( A\right) ],%
\text{ \ \ }A\in \mathcal{A}_{\Lambda },  \label{26}
\end{equation}%
defined uniquely $\rho \overset{\mathfrak{R}}{\mapsto }\mu _{\rho }$ to each
state $\rho .$ Since $\mathbb{M}$ is a normalized (finitely additive)
measure on the algebra $\mathcal{A}_{\Lambda }$, also, $\mu _{\rho }$ is a
normalized (finitely additive) measure on $\mathcal{A}_{\Lambda }.$
Moreover, since $\mathbb{M}$ is a unique measure on $\mathcal{A}_{\Lambda },$
satisfying representation (\ref{20}), the measure $\mu _{\rho },$ uniquely
defined to each state $\rho $ by relation (\ref{26}), is also a unique
normalized real-valued measure on $\mathcal{A}_{\Lambda }$, satisfying
representation (\ref{25}), hence, (\ref{23}) and (\ref{24}).

Further, if $\rho _{j}\overset{\mathfrak{R}}{\mapsto }\mu _{\rho _{j}},$
then, due to definition (\ref{26}) of this mapping, the measure $\sum \alpha
_{j}\mu _{\rho _{j}},$ where $\alpha _{j}>0$ and $\sum \alpha _{j}=1,$ is a
unique real-valued measure, corresponding to the state $\sum \alpha _{j}\rho
_{j}$ due to (\ref{26}) and satisfying (\ref{23}), (\ref{24}). This
completes the proof.\medskip
\end{proof}

For bounded quantum observables, proposition 1 implies the following
representation.

\begin{corollary}
Let $\mathcal{H}$ be a complex separable Hilbert space. The representation 
\begin{equation}
\frac{1}{n!}\mathrm{tr}[\rho \{X_{1}\cdot \ldots \cdot X_{n}\}_{\mathrm{sym}%
}]=\dint \limits_{\Lambda }\pi _{_{X_{1}}}(\lambda )\cdot \ldots \cdot \pi
_{_{X_{n}}}(\lambda )\text{ }\mu _{\rho }\left( \mathrm{d}\lambda \right)
\label{po}
\end{equation}%
holds for all states $\rho $ and all finite collections $\{X_{1},\ldots
,X_{n}\}$ of bounded quantum observables on $\mathcal{H}.$
\end{corollary}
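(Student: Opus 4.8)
The plan is to derive Corollary 1 from Proposition 1 by specializing relation (\ref{24}) to finitely many bounded observables and then integrating. First I would observe that for bounded quantum observables $X_{1},\ldots,X_{n}\in\mathcal{L}_{\mathcal{H}}^{(s)}$, the spectrum $\mathrm{sp}X_{i}$ is a compact subset of $\mathbb{R}$, so the canonical projections $\pi_{X_{i}}:\Lambda\to\mathrm{sp}X_{i}$ are bounded random variables on $(\Lambda,\mathcal{A}_{\Lambda})$, and the product $\pi_{X_{1}}(\lambda)\cdot\ldots\cdot\pi_{X_{n}}(\lambda)$ is a bounded $\mathcal{A}_{\Lambda}$-measurable function. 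Hence the right-hand side of (\ref{po}) is a well-defined integral of a bounded measurable function against the finite (signed) finitely additive measure $\mu_{\rho}$ of Proposition 1.

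Next I would recall the operator analogue of the spectral integral: by the spectral theorem (\ref{2}) and the definition (\ref{9_}) of $\mathcal{P}_{(X_{1},\ldots,X_{n})}$ on rectangles, the symmetrized operator product satisfies
\begin{equation}
\frac{1}{n!}\left\{X_{1}\cdot\ldots\cdot X_{n}\right\}_{\mathrm{sym}}=\dint\limits_{\mathrm{sp}X_{1}\times\cdots\times\mathrm{sp}X_{n}}x_{1}\cdot\ldots\cdot x_{n}\,\mathcal{P}_{(X_{1},\ldots,X_{n})}(\mathrm{d}x_{1},\ldots,\mathrm{d}x_{n}),
\end{equation}
which follows because each $\mathrm{P}_{X_{i}}$ is a projection-valued measure representing $X_{i}$, and the symmetrized product of the spectral integrals equals the spectral-type integral of the product $x_{1}\cdots x_{n}$ against the symmetrized product measure. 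Taking the trace against $\rho$ then turns the left-hand side of (\ref{po}) into $\int x_{1}\cdots x_{n}\,\mu^{(X_{1},\ldots,X_{n})}_{\rho}(\mathrm{d}x_{1},\ldots,\mathrm{d}x_{n})$, where $\mu^{(X_{1},\ldots,X_{n})}_{\rho}(\cdot):=\mathrm{tr}[\rho\,\mathcal{P}_{(X_{1},\ldots,X_{n})}(\cdot)]$ is the finitely additive real-valued measure on $\mathcal{F}_{\mathrm{sp}X_{1}\times\cdots\times\mathrm{sp}X_{n}}$ appearing in (\ref{23}).

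Finally I would invoke Proposition 1, specifically the image-measure identity (\ref{23}): $\mu^{(X_{1},\ldots,X_{n})}_{\rho}$ is exactly the pushforward of $\mu_{\rho}$ under $\pi_{(X_{1},\ldots,X_{n})}=(\pi_{X_{1}},\ldots,\pi_{X_{n}})$. By the change-of-variables (image-measure) formula for integration of a bounded measurable function against a finitely additive measure, $\int x_{1}\cdots x_{n}\,\mu^{(X_{1},\ldots,X_{n})}_{\rho}(\mathrm{d}x)=\int_{\Lambda}\pi_{X_{1}}(\lambda)\cdots\pi_{X_{n}}(\lambda)\,\mu_{\rho}(\mathrm{d}\lambda)$, which is precisely (\ref{po}). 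The main obstacle is a technical one: since $\mu_{\rho}$ is only finitely additive (not a genuine countably additive measure) and is real-valued rather than positive, I would need the integration theory for bounded functions with respect to finitely additive signed measures on an algebra — i.e.\ the integral as a limit of integrals of simple functions uniformly approximating $\pi_{X_{1}}\cdots\pi_{X_{n}}$ — together with the fact that the image-measure formula and the spectral-integral identity survive in that setting; boundedness of all the observables is exactly what guarantees the uniform approximation by simple functions built from rectangles in $\mathcal{F}_{\mathrm{sp}X_{1}\times\cdots\times\mathrm{sp}X_{n}}$, so the finitely additive framework suffices and no countable additivity is needed.
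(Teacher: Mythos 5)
Your proposal is correct and follows essentially the same route as the paper, which derives (\ref{po}) by combining the image-measure representation (\ref{24}) of Proposition 1 with the spectral theorem (\ref{2}), using boundedness of the $X_{i}$ to ensure the trace exists. You simply make explicit the finitely additive integration details (uniform approximation by simple functions over rectangles, change of variables under the pushforward) that the paper's one-line proof leaves implicit.
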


\begin{proof}
For bounded quantum observables $X_{1},...,X_{n}$, the operator $\rho
\{X_{1}\cdot \ldots \cdot X_{n}\}_{\mathrm{sym}}$ is trace class, so that
the quantum average $\mathrm{tr}[\rho \{X_{1}\cdot \ldots \cdot X_{n}\}_{%
\mathrm{sym}}]<\infty $ exists for all states $\rho $. Combining (\ref{24})
and (\ref{2}), we derive (\ref{po}).
\end{proof}

\section{Quasi hidden variable (qHV) modelling}

As we discussed in section III, the joint von Neumann measurement of quantum
observables $X_{1},...,X_{n}$ on $\mathcal{H}$ is possible if and only if
these observables mutually commute and is described in this case by the
joint spectral measure $\mathrm{P}_{X_{1},...,X_{n}}$ defined by (\ref{5}).
If a joint von Neumann measurement of mutually commuting quantum observables 
$X_{1},...,X_{n}$ is performed on a quantum system in a state $\rho $ on $%
\mathcal{H},$ then the expression 
\begin{equation}
\mathrm{tr}[\rho \mathrm{P}_{X_{1},...,X_{n}}(B)],\text{ \ \ }B\in \mathcal{B%
}_{\mathrm{sp}X_{1}\times \cdots \times \mathrm{sp}X_{n}},  \label{vn'}
\end{equation}%
gives the probability that these observables take values $x_{1},...x_{n}$
such that $(x_{1},...x_{n})\in B.$ In particular, 
\begin{equation}
\mathrm{tr}[\rho \{ \mathrm{P}_{X_{1}}(B_{1})\cdot \ldots \cdot \mathrm{P}%
_{X_{n}}(B_{n})\}]  \label{vn}
\end{equation}%
is the probability that the observed values of $X_{1},...,X_{n}\ $are in
sets $B_{1}\in \mathcal{B}_{\mathrm{sp}X_{1}},...,B_{n}\in \mathcal{B}_{%
\mathrm{sp}X_{n}},$ respectively.

For a collection $\{X_{1},...,X_{n}\}$ of mutually commuting quantum
observables, the measure $\mathcal{P}_{(X_{1},...,X_{n})}$ discussed in
theorem 1 coincides with the restriction of the joint spectral measure $%
\mathrm{P}_{X_{1},...,X_{n}}$ to the algebra $\mathcal{F}_{\mathrm{sp}%
X_{1}\times \cdots \times \mathrm{sp}X_{n}}$.

In the following sections, this allows us to analyze (theorems 2, 3) a
possibility of modelling of the Hilbert space description (\ref{vn}) of all
joint von Neumann measurements \emph{in measure theory terms}.

\subsection{A statistically noncontextual qHV model}

In this section, we formulate and prove the statements (theorem 2, corollary
2, propositions 2, 3) which immediately give the positive answers to
questions 2 and 1, formulated in section II.

For our below consideration, we recall that if quantum observables $%
X_{1},...,X_{n}$ mutually commute, then, for each Borel function $\psi :%
\mathbb{R}^{n}\rightarrow \mathbb{R},$ the notation $\psi (X_{1},...,X_{n})$
means \cite{15} the quantum observable 
\begin{equation}
\psi (X_{1},...,X_{n}):=\dint \limits_{\mathbb{R}^{n}}\psi (x_{1},...,x_{n})%
\text{ }\mathrm{P}_{X_{1}}(\mathrm{d}x_{1})\cdot \ldots \cdot \mathrm{P}%
_{X_{n}}(\mathrm{d}x_{n}).  \label{qww}
\end{equation}%
If a real-valued function $\psi $ is bounded, then the quantum observable $%
\psi (X_{1},...,X_{n})$ is also bounded.

\begin{theorem}
Let $\mathcal{H}$ be an arbitrary complex separable Hilbert space. There
exist:\smallskip \newline
(i) a set $\Omega $ and an algebra $\mathcal{F}_{\Omega }$ of subsets of $%
\Omega ;$\smallskip \newline
(ii) a one-to-one mapping $\Phi :\mathfrak{X}_{\mathcal{H}}\rightarrow 
\mathfrak{F}_{(\Omega ,\mathcal{F}_{\Omega })}$ from the set $\mathfrak{X}_{%
\mathcal{H}}$ of all quantum observables on $\mathcal{H}$ into the set $%
\mathfrak{F}_{(\Omega ,\mathcal{F}_{\Omega })}$ of all random variables on $%
(\Omega ,\mathcal{F}_{\Omega })$, with values $f_{X}:=\Phi (X)$ satisfying
the spectral correspondence rule $f_{X}(\Omega )=\mathrm{sp}X;$\medskip \ 
\newline
such that, to each quantum state $\rho $ on $\mathcal{H},$ there corresponds 
$(\rho \overset{\mathfrak{R}}{\mapsto }\nu _{\rho })$ a unique normalized
real-valued measure $\nu _{\rho }$ on $(\Omega ,\mathcal{F}_{\Omega }),$
satisfying the relation \cite{16-}%
\begin{eqnarray}
\mathrm{tr}[\rho \mathrm{P}_{X_{1},...,X_{n}}(F)] &=&\nu _{\rho }\left(
f_{(X_{1},...X_{n})}^{-1}(F)\right) ,\text{ \ \ }F\in \mathcal{F}_{\mathrm{sp%
}X_{1}\times \cdots \times \mathrm{sp}X_{n}},  \label{28'} \\
f_{(X_{1},...,X_{n})} &:&=(f_{X_{1}},...,f_{X_{n}}),  \notag
\end{eqnarray}%
in particular, 
\begin{eqnarray}
\mathrm{tr}[\rho \{ \mathrm{P}_{X_{1}}(B_{1})\cdot \ldots \cdot \mathrm{P}%
_{X_{n}}(B_{n})\}] &=&\nu _{\rho }\left( \text{ }f_{X_{1}}^{-1}(B_{1})\cap
\cdots \cap f_{X_{n}}^{-1}(B_{n})\right)  \label{28''} \\
&\equiv &\dint \limits_{\Omega }\chi _{f_{X_{1}}^{-1}(B_{1})}(\omega )\cdot
\ldots \cdot \chi _{f_{X_{n}}^{-1}(B_{n})}(\omega )\nu _{\rho }(\text{%
\textrm{d}}\omega ),  \notag \\
B_{i} &\in &\mathcal{B}_{\mathrm{sp}X_{i}},\text{ \ \ }i=1,...,n,  \notag
\end{eqnarray}%
for all collections $\{X_{1},...,X_{n}\},$ $n\in \mathbb{N},$ of mutually
commuting quantum observables on $\mathcal{H}$.\textbf{\ }In (\ref{28'}), $%
P_{X_{1},...,X_{n}}$ is the joint spectral measure (\ref{5}). If $\rho _{j}%
\overset{\mathfrak{R}}{\mapsto }\nu _{\rho _{j}},$ $j=1,...,m<\infty ,$ then 
$\sum \alpha _{j}\rho _{j}\overset{\mathfrak{R}}{\mapsto }\sum \alpha
_{j}\nu _{\rho _{j}},$ for all $\alpha _{j}>0,$ $\sum \alpha _{j}=1.$
\end{theorem}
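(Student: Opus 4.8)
The plan is to build the probability space $(\Omega,\mathcal{F}_\Omega)$ directly on top of the space $(\Lambda,\mathcal{A}_\Lambda)$ already constructed for Theorem~1, exploiting the fact that on mutually commuting tuples the symmetrized product measure $\mathcal{P}_{(X_1,\dots,X_n)}$ coincides with the genuine joint spectral measure $\mathrm{P}_{X_1,\dots,X_n}$ restricted to the product algebra. Concretely, first I would set $\Omega:=\Lambda$, $\mathcal{F}_\Omega:=\mathcal{A}_\Lambda$, and define the random variable attached to a quantum observable $X$ to be the canonical projection $f_X:=\pi_X$, so that $f_X(\Omega)=\mathrm{sp}X$ holds by construction and clause (ii)'s spectral correspondence rule is automatic. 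The measure $\nu_\rho$ is then taken to be exactly the measure $\mu_\rho$ supplied by Proposition~1, i.e. $\nu_\rho(A):=\mathrm{tr}[\rho\,\mathbb{M}(A)]$. Uniqueness of $\nu_\rho$ and the affine property $\sum\alpha_j\rho_j\mapsto\sum\alpha_j\nu_{\rho_j}$ are then inherited verbatim from the corresponding assertions in Proposition~1, since the construction is literally the same mapping $\mathfrak{R}$.

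The substantive point to verify is the displayed relation~(\ref{28'}): that for \emph{mutually commuting} $X_1,\dots,X_n$ and $F\in\mathcal{F}_{\mathrm{sp}X_1\times\cdots\times\mathrm{sp}X_n}$ one has $\mathrm{tr}[\rho\,\mathrm{P}_{X_1,\dots,X_n}(F)]=\nu_\rho(f_{(X_1,\dots,X_n)}^{-1}(F))$, with $f_{(X_1,\dots,X_n)}^{-1}(F)=\pi_{(X_1,\dots,X_n)}^{-1}(F)$. Here I would argue as follows. By the remark preceding the subsection, for a commuting collection the operator measure $\mathcal{P}_{(X_1,\dots,X_n)}$ equals the restriction of $\mathrm{P}_{X_1,\dots,X_n}$ to $\mathcal{F}_{\mathrm{sp}X_1\times\cdots\times\mathrm{sp}X_n}$; in particular the symmetrization is harmless because the factors $\mathrm{P}_{X_i}(B_i)$ commute, so $\frac1{n!}\{\mathrm{P}_{X_1}(B_1)\cdots\mathrm{P}_{X_n}(B_n)\}_{\mathrm{sym}}=\mathrm{P}_{X_1}(B_1)\cdots\mathrm{P}_{X_n}(B_n)$. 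Feeding this into~(\ref{24}) of Proposition~1 gives, for rectangles, $\mathrm{tr}[\rho\,\mathrm{P}_{X_1}(B_1)\cdots\mathrm{P}_{X_n}(B_n)]=\mu_\rho(\pi_{X_1}^{-1}(B_1)\cap\cdots\cap\pi_{X_n}^{-1}(B_n))$, which is exactly~(\ref{28''}); the identification of the right-hand side with an integral of a product of indicator functions is just the definition of preimage intersection together with $\chi_{A\cap B}=\chi_A\chi_B$. To pass from rectangles to all $F$ in the product algebra $\mathcal{F}_{\mathrm{sp}X_1\times\cdots\times\mathrm{sp}X_n}$ I would invoke finite additivity of both sides: every element of the product algebra is a finite disjoint union of measurable rectangles, and both $B\mapsto\mathrm{tr}[\rho\,\mathrm{P}_{X_1,\dots,X_n}(B)]$ and $A\mapsto\nu_\rho(A)$ are finitely additive, while $\pi_{(X_1,\dots,X_n)}^{-1}$ preserves disjoint unions. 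Then uniqueness of $\nu_\rho$ satisfying~(\ref{28'}) follows from uniqueness of $\mu_\rho$ in Proposition~1, because the cylinders $\pi_{(X_1,\dots,X_n)}^{-1}(F)$ generate $\mathcal{A}_\Lambda$.

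The part needing genuine care — and the step I would flag as the main obstacle — is clause~(ii): that $\Phi:X\mapsto\pi_X$ is \emph{injective} from $\mathfrak{X}_\mathcal{H}$ into the random variables on $(\Lambda,\mathcal{A}_\Lambda)$. This is not a formality, since different self-adjoint operators could a priori induce the same coordinate function on the product space. However, injectivity is essentially built into the definition of $\Lambda=\prod_{X\in\mathfrak{X}_\mathcal{H}}\mathrm{sp}X$: distinct observables $X\neq Y$ index distinct coordinate factors, so the projections $\pi_X$ and $\pi_Y$ are distinct as functions on $\Lambda$ (for any $\lambda$ one may independently choose $\lambda_X\neq\lambda_Y$ whenever the spectra are nontrivial, and the degenerate cases where a spectrum is a single point are handled separately). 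I would spell this out carefully, noting that the construction makes no identification between the factors labelled by functionally related observables, which is precisely what distinguishes this ``statistically noncontextual'' bookkeeping from a Kochen--Specker-type noncontextual model. Finally I would record that~(\ref{28''}) combined with linearity of the trace and the spectral theorem yields the average relations~(\ref{st1})--(\ref{st2}) in statistical form, but since those are packaged into the separate Proposition/Corollary statements I would only gesture at this here and defer the details.
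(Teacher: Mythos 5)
Your proposal is correct and follows essentially the same route as the paper: take $\Omega=\Lambda$, $\mathcal{F}_\Omega=\mathcal{A}_\Lambda$, $f_X=\pi_X$, $\nu_\rho=\mu_\rho$ from Proposition~1, and observe that for mutually commuting observables the symmetrized product measure $\mathcal{P}_{(X_1,\dots,X_n)}$ coincides with the restriction of the joint spectral measure, so that (\ref{28'}), (\ref{28''}) and the affine property are inherited directly. Your extra care over injectivity of $X\mapsto\pi_X$ and the rectangle-to-algebra extension only makes explicit what Proposition~1 and the product construction already supply.
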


\begin{proof}
In order to prove the existence point of theorem 2, let us take the
measurable space $(\Lambda ,\mathcal{A}_{\Lambda })$ and the random
variables $\pi _{X}(\lambda )=\lambda _{X}\in \mathrm{sp}X,$ $X\in \mathfrak{%
X}_{\mathcal{H}}$, on this space, which are specified in section IV.A. Since 
$\pi _{X_{1}}\neq \pi _{X_{2}}$ $\Leftrightarrow $ \ $X_{1}\neq X_{2}$ and $%
\pi _{X}(\Lambda )=\mathrm{sp}X,$ the set $\{ \pi _{X}\mid X\in \mathfrak{X}%
_{\mathcal{H}}\}$ of random variables is put into the one-to-one
correspondence to the set $\mathfrak{X}_{\mathcal{H}}$ of all quantum
observables on $\mathcal{H}$ and, for each random variable $\pi _{X},$ the
spectral correspondence rule $\pi _{X}(\Lambda )=\mathrm{sp}X$ is fulfilled.

Furthermore, by proposition 1, to each quantum state $\rho $ on $\mathcal{H}%
, $ there corresponds a unique normalized real-valued measures $\mu _{\rho }$%
, defined on the algebra $\mathcal{A}_{\Lambda }$ and satisfying
representations (\ref{23}), (\ref{24}). For a collection $%
\{X_{1},...,X_{n}\} $ of mutually commuting quantum observables, the measure 
$\mathcal{P}_{(X_{1},...,X_{n})}$ in the left-hand sides of (\ref{23}), (\ref%
{24}) reduces to the joint spectral measure $\mathrm{P}_{X_{1},...,X_{n}}$.
Therefore, with random variables $\pi _{X_{1}},...,\pi _{X_{n}}$ and the
measures $\{ \mu _{\rho },$ $\forall \rho \}$ in the right-hand side,
representations (\ref{28'}), (\ref{28''}) hold for all states and all finite
collections $\{X_{1},...,X_{n}\}$ of mutually commuting quantum observables.
This proves the existence point of theorem 2. Also, by proposition 1, if $%
\rho _{j}\overset{\mathfrak{R}}{\mapsto }\mu _{\rho _{j}}$, $j=1,...,m,$\
then $\sum \alpha _{j}\rho _{j}\overset{\mathfrak{R}}{\mapsto }\sum \alpha
_{j}\mu _{\rho _{j}}.$ This completes the proof.
\end{proof}

Representations (\ref{28'}), (\ref{28''}) imply.

\begin{corollary}
In the setting of theorem 2, for all states $\rho $ and all finite
collections $\{X_{1},...,X_{n}\}$ of mutually commuting quantum observables,
the representation 
\begin{equation}
\mathrm{tr}[\rho \psi (X_{1},...,X_{n})]=\dint \limits_{\Omega }\psi
(f_{X_{1}}(\omega ),...,f_{X_{n}}(\omega ))\nu _{\rho }\left( \mathrm{d}%
\omega \right)  \label{2d}
\end{equation}%
holds for all bounded Borel functions $\psi :\mathbb{R}^{n}\rightarrow 
\mathbb{R}$ and the representation 
\begin{equation}
\mathrm{tr}[\rho (X_{1}\cdot \ldots \cdot X_{n})]=\dint \limits_{\Omega
}f_{X_{1}}(\omega )\cdot \ldots \cdot f_{X_{n}}(\omega )\nu _{\rho }\left( 
\mathrm{d}\omega \right)  \label{29}
\end{equation}%
is fulfilled whenever mutually commuting quantum observables $%
X_{1},...,X_{n}\ $are bounded.
\end{corollary}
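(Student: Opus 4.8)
The plan is to derive both displayed identities directly from Theorem~2 by the standard approximation and integration arguments used to pass from set functions to integrals of measurable functions. Everything rests on the two representations $(\ref{28'})$, $(\ref{28''})$ already established, together with the spectral-correspondence rule $f_{X}(\Omega)=\mathrm{sp}X$ and the fact that, for mutually commuting $X_{1},\dots,X_{n}$, the quantum observable $\psi(X_{1},\dots,X_{n})$ is defined by the spectral integral $(\ref{qww})$ against $\mathrm{P}_{X_{1},\dots,X_{n}}$.

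First I would prove $(\ref{2d})$ for indicator functions $\psi=\chi_{F}$ with $F\in\mathcal{F}_{\mathrm{sp}X_{1}\times\cdots\times\mathrm{sp}X_{n}}$: in that case $\psi(X_{1},\dots,X_{n})=\mathrm{P}_{X_{1},\dots,X_{n}}(F)$ by $(\ref{qww})$, while the right-hand side of $(\ref{2d})$ is $\nu_{\rho}(f_{(X_{1},\dots,X_{n})}^{-1}(F))$ because $\psi(f_{X_{1}}(\omega),\dots,f_{X_{n}}(\omega))=\chi_{f_{(X_{1},\dots,X_{n})}^{-1}(F)}(\omega)$; thus $(\ref{2d})$ is exactly $(\ref{28'})$. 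By additivity of $\mathrm{tr}[\rho\,\cdot\,]$, of the spectral integral, and of $\nu_{\rho}$, identity $(\ref{2d})$ then extends to all $\mathcal{F}_{\mathrm{sp}X_{1}\times\cdots\times\mathrm{sp}X_{n}}$-simple functions $\psi$. For a general bounded Borel $\psi:\mathbb{R}^{n}\rightarrow\mathbb{R}$ one approximates $\psi$ uniformly on $\mathrm{sp}X_{1}\times\cdots\times\mathrm{sp}X_{n}$ by simple functions $\psi_{k}$ measurable with respect to the product algebra; since $\psi_{k}\rightarrow\psi$ uniformly, $\psi_{k}(X_{1},\dots,X_{n})\rightarrow\psi(X_{1},\dots,X_{n})$ in operator norm, hence $\mathrm{tr}[\rho\,\psi_{k}(X_{1},\dots,X_{n})]\rightarrow\mathrm{tr}[\rho\,\psi(X_{1},\dots,X_{n})]$, and likewise $\psi_{k}\circ f_{(X_{1},\dots,X_{n})}\rightarrow\psi\circ f_{(X_{1},\dots,X_{n})}$ uniformly on $\Omega$ so that the integrals against the (finite, real-valued, bounded-variation) measure $\nu_{\rho}$ converge. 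Passing to the limit on both sides yields $(\ref{2d})$ in full generality.

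Identity $(\ref{29})$ is then the case $\psi(x_{1},\dots,x_{n})=x_{1}\cdots x_{n}$ of $(\ref{2d})$: when $X_{1},\dots,X_{n}$ are bounded, their spectra are bounded subsets of $\mathbb{R}$, so this $\psi$ is bounded on $\mathrm{sp}X_{1}\times\cdots\times\mathrm{sp}X_{n}$ and $\psi(X_{1},\dots,X_{n})=X_{1}\cdots X_{n}$ by $(\ref{qww})$ and the commuting spectral calculus, while $\psi(f_{X_{1}}(\omega),\dots,f_{X_{n}}(\omega))=f_{X_{1}}(\omega)\cdots f_{X_{n}}(\omega)$; substituting gives $(\ref{29})$ directly. (Alternatively, $(\ref{29})$ follows from Corollary~1 combined with the fact that, for commuting observables, the symmetrized product $\{X_{1}\cdot\ldots\cdot X_{n}\}_{\mathrm{sym}}$ equals $n!\,X_{1}\cdots X_{n}$, so that $(\ref{po})$ specializes to $(\ref{29})$ once one identifies $\pi_{X_{i}}$ with $f_{X_{i}}$ and $\mu_{\rho}$ with $\nu_{\rho}$.)

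The only genuinely non-routine point is making sure the approximation argument is legitimate when $\nu_{\rho}$ is merely a normalized \emph{real-valued} (signed) measure rather than a probability measure: uniform convergence of integrands is enough precisely because $\nu_{\rho}$ has finite total variation, so $\bigl|\int_{\Omega}(\psi_{k}-\psi)\circ f\,\mathrm{d}\nu_{\rho}\bigr|\le\|\psi_{k}-\psi\|_{\infty}\,\|\nu_{\rho}\|_{\mathrm{var}}\rightarrow0$; no dominated-convergence machinery is needed. Everything else is bookkeeping, so I expect the proof to be short — essentially "specialize $(\ref{28'})$ to $\chi_{F}$, extend by linearity, pass to a uniform limit, then plug in $\psi(x_{1},\dots,x_{n})=x_{1}\cdots x_{n}$."
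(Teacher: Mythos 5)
Your strategy is the same as the paper's: reduce (\ref{2d}) to the indicator case, where it is exactly (\ref{28'}), and obtain (\ref{29}) by specializing $\psi (x_{1},...,x_{n})=x_{1}\cdot \ldots \cdot x_{n}$ (the paper instead invokes (\ref{28''}) together with the spectral representation (\ref{2}), which is the same computation). One step, however, would fail as literally written: a general bounded Borel function $\psi $ \emph{cannot} be uniformly approximated on $\mathrm{sp}X_{1}\times \cdots \times \mathrm{sp}X_{n}$ by simple functions measurable with respect to the product \emph{algebra} $\mathcal{F}_{\mathrm{sp}X_{1}\times \cdots \times \mathrm{sp}X_{n}}$ -- the level sets $\psi ^{-1}([c_{j},c_{j+1}))$ are Borel but need not belong to that algebra, and if $\chi _{B}$ is the indicator of a Borel set $B$ outside the algebra, any algebra-simple $s$ with $\Vert \chi _{B}-s\Vert _{\infty }<1/2$ would force $B=\{s>1/2\}$ to lie in the algebra. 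The standard repair is to work with the pushforward: the set function $F\mapsto \nu _{\rho }(f_{(X_{1},...,X_{n})}^{-1}(F))$ coincides, by (\ref{28'}), with the $\sigma $-additive probability measure $\mathrm{tr}[\rho \mathrm{P}_{X_{1},...,X_{n}}(\cdot )]$ on the product algebra, hence extends uniquely to the generated Borel $\sigma $-algebra where it still equals $\mathrm{tr}[\rho \mathrm{P}_{X_{1},...,X_{n}}(\cdot )]$; integrating $\psi $ against this extension (which is how the left-hand side of (\ref{2d}) is computed via (\ref{qww})) then gives the right-hand side by the change-of-variables formula, with no uniform approximation over the mere algebra needed. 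Your bounded-variation estimate for the signed measure $\nu _{\rho }$ is correct and is the right tool for the limits that do occur; with the approximation rerouted through the $\sigma $-additive pushforward, the rest of your argument, including the specialization to $\psi (x_{1},...,x_{n})=x_{1}\cdot \ldots \cdot x_{n}$ on the bounded spectra, goes through and matches the paper's intent.
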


\begin{proof}
For a bounded Borel function $\psi :\mathbb{R}^{n}\rightarrow \mathbb{R},$
the quantum observable (\ref{qww}) is bounded, so that the operator $\rho
\psi (X_{1}\cdot \ldots \cdot X_{n})$ is trace class for all states $\rho .$
This and relation (\ref{28''}) imply%
\begin{eqnarray}
\mathrm{tr}[\rho \psi (X_{1},\ldots ,X_{n})] &=&\dint \limits_{\mathbb{R}%
^{n}}\psi (x_{1},...,x_{n})\text{ }\mathrm{tr}[\rho \{ \mathrm{P}_{X_{1}}(%
\mathrm{d}x_{1})\cdot \ldots \cdot \mathrm{P}_{X_{n}}(\mathrm{d}x_{n})\}] \\
&=&\dint \limits_{\Omega }\psi (f_{X_{1}}(\omega ),...,f_{X_{n}}(\omega
))\nu _{\rho }\left( \mathrm{d}\omega \right) .  \notag
\end{eqnarray}%
The proof of representation (\ref{29}) is quite similar to our proof of (\ref%
{po}). Namely, for bounded quantum observables $X_{1},...,X_{n},$ the
operator $\rho (X_{1}\cdot \ldots \cdot X_{n})$ is trace class. This and
relations (\ref{28''}), (\ref{2}) imply representation (\ref{29}). This
proves the statement.\medskip
\end{proof}

In view of expressions (\ref{vn'}), (\ref{vn}) for the von Neumann joint
probabilities, theorem 2 and corollary 2 prove that, for an arbitrary
Hilbert space $\mathcal{H},$ the Hilbert space description of all the joint
von Neumann measurements can be reproduced in terms\emph{\ }of a single
measurable space $(\Omega ,\mathcal{F}_{\Omega })$ via the set $\{ \nu
_{\rho },$ $\forall \rho \}$ of normalized real-valued measures and the set $%
\{f_{X}:\Omega \rightarrow \mathrm{sp}X,$ $f_{X}(\Omega )=\mathrm{sp}X,$ $%
X\in \mathfrak{X}_{\mathcal{H}}\}$ of random variables -- that is, via \emph{%
a} \emph{quasi hidden variable (qHV) model} \cite{12, 13}.

In this qHV model, (i) each quantum observable on $\mathcal{H}$ is modelled
on $(\Omega ,\mathcal{F}_{\Omega })$ by only one random variable
representing this quantum observable under all joint von Neumann
measurements; (ii) all the von Neumann joint probabilities are reproduced
due to the \emph{noncontextual} representations (\ref{28'}), (\ref{28''})
via nonnegative values of real-valued measures and (iii) all the quantum
product averages are reproduced due to the \emph{noncontextual
representations} (\ref{2d}) (\ref{29}) via the classical-like averages of
the corresponding expressions for the random variables. In view of the
setting of theorem 2, in this new model, the Kochen-Specker functional
conditions (\ref{03}) - (\ref{02}) do not need to hold, however, due to
representations (\ref{2d}), (\ref{29}), all these conditions are satisfied
in average -- in the sense of relations (\ref{st1}) - (\ref{st2}), discussed
in section II.

Therefore, the qHV model specified by theorem 2 and corollary 2 is \emph{%
statistically noncontextual} according to the terminology outlined in
Introduction and section II.

The specific example of such a qHV model is given in the proof of theorem 2.

For the joint von Neumann measurements performed on a quantum state $\rho ,$
theorem 2 and corollary 2 imply the following probability model.

\begin{proposition}
Let $\rho \mathcal{\ }$be a state on an arbitrary complex separable Hilbert
space. There exist a measure space $(\Omega ,\mathcal{F}_{\Omega },\nu
_{\rho })$ with a normalized real-valued measure $\nu _{\rho }$ and a set 
\begin{equation}
\{f_{X}:\Omega \rightarrow \mathrm{sp}X\mid f_{X}(\Omega )=\mathrm{sp}X,%
\text{ \ \ }X\in \mathfrak{X}_{\mathcal{H}}\}
\end{equation}%
of random variables one-to-one corresponding to the set $\mathfrak{X}_{%
\mathcal{H}}$ of all quantum observables on $\mathcal{H}$ such that all the
von Neumann joint probabilities (\ref{vn'}), (\ref{vn}) admit the
noncontextual qHV representation%
\begin{eqnarray}
\mathrm{tr}[\rho \mathrm{P}_{X_{1},...,X_{n}}(F)] &=&\nu _{\rho }\left(
f_{(X_{1},...X_{n})}^{-1}(F)\right) ,\text{ \ \ }F\in \mathcal{F}_{\mathrm{sp%
}X_{1}\times \cdots \times \mathrm{sp}X_{n}}, \\
f_{(X_{1},...,X_{n})} &:&=(f_{X_{1}},...,f_{X_{n}}),\text{ \ \ }n\in \mathbb{%
N},  \notag
\end{eqnarray}%
in particular, 
\begin{eqnarray}
\mathrm{tr}[\rho \{ \mathrm{P}_{X_{1}}(B_{1})\cdot \ldots \cdot \mathrm{P}%
_{X_{n}}(B_{n})\}] &=&\nu _{\rho }\left( \text{ }f_{X_{1}}^{-1}(B_{1})\cap
\cdots \cap f_{X_{n}}^{-1}(B_{n})\right)  \label{28} \\
&\equiv &\dint \limits_{\Omega }\chi _{f_{X_{1}}^{-1}(B_{1})}(\omega )\cdot
\ldots \cdot \chi _{f_{X_{n}}^{-1}(B_{n})}(\omega )\nu _{\rho }(\text{%
\textrm{d}}\omega ),  \notag \\
B_{i} &\in &\mathcal{B}_{\mathrm{sp}X_{i}},\text{ \ \ }i=1,...,n\in \mathbb{N%
},  \notag
\end{eqnarray}%
implying the noncontextual qHV representation%
\begin{equation}
\mathrm{tr}[\rho (X_{1}\cdot \ldots \cdot X_{n})]=\dint \limits_{\Omega
}f_{X_{1}}(\omega )\cdot \ldots \cdot f_{X_{n}}(\omega )\text{ }\nu _{\rho
}\left( \mathrm{d}\omega \right)
\end{equation}%
for the quantum product average whenever mutually commuting quantum
observables $X_{1},\ldots ,$ $X_{n},$ $n\in \mathbb{N},$ are bounded.
\end{proposition}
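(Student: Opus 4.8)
The plan is to obtain Proposition 2 as a straightforward specialization of Theorem 2 and Corollary 2 to a single fixed state $\rho$, so the proof is essentially a matter of bookkeeping rather than new argument. First I would invoke Theorem 2 to produce the algebra $(\Omega ,\mathcal{F}_{\Omega })$, the one-to-one map $\Phi :\mathfrak{X}_{\mathcal{H}}\rightarrow \mathfrak{F}_{(\Omega ,\mathcal{F}_{\Omega })}$ with $f_{X}=\Phi (X)$ satisfying the spectral correspondence $f_{X}(\Omega )=\mathrm{sp}X$, and the assignment $\rho \overset{\mathfrak{R}}{\mapsto }\nu _{\rho }$ of a unique normalized real-valued measure on $(\Omega ,\mathcal{F}_{\Omega })$. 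For the fixed state $\rho$ of the proposition, I simply keep this $\nu _{\rho }$ and the collection $\{f_{X}\mid X\in \mathfrak{X}_{\mathcal{H}}\}$; the one-to-one correspondence between random variables and quantum observables is inherited verbatim from clause (ii) of Theorem 2. The concrete realization is of course the one built in the proof of Theorem 2, namely $(\Omega ,\mathcal{F}_{\Omega })=(\Lambda ,\mathcal{A}_{\Lambda })$, $f_{X}=\pi _{X}$, and $\nu _{\rho }=\mu _{\rho }$.

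Next I would read off the joint-probability representations. Relations (\ref{28'}) and (\ref{28''}) of Theorem 2 hold for \emph{every} state, hence in particular for the fixed $\rho$; rewriting them with the present notation gives exactly the displayed representations for $\mathrm{tr}[\rho \mathrm{P}_{X_{1},...,X_{n}}(F)]$ and for $\mathrm{tr}[\rho \{\mathrm{P}_{X_{1}}(B_{1})\cdot \ldots \cdot \mathrm{P}_{X_{n}}(B_{n})\}]$, valid for all finite collections $\{X_{1},\ldots ,X_{n}\}$ of mutually commuting quantum observables on $\mathcal{H}$. Since these von Neumann joint probabilities are precisely the expressions (\ref{vn'}), (\ref{vn}), this establishes the asserted noncontextual qHV representation of all the von Neumann joint probabilities for measurements performed on $\rho$. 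The nonnegativity of the relevant values of $\nu _{\rho }$ is automatic: each side equals a genuine quantum probability $\mathrm{tr}[\rho \mathrm{P}_{X_{1},...,X_{n}}(F)]\geq 0$, so the model is a qHV model in the sense of Refs.~12, 13.

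Finally, for the product-average statement I would specialize Corollary 2: equation (\ref{29}) gives $\mathrm{tr}[\rho (X_{1}\cdot \ldots \cdot X_{n})]=\dint_{\Omega }f_{X_{1}}(\omega )\cdot \ldots \cdot f_{X_{n}}(\omega )\,\nu _{\rho }(\mathrm{d}\omega )$ for any finite collection of \emph{bounded} mutually commuting quantum observables, which is exactly the last display of the proposition. Thus the proof reduces to: (a) fix the data from Theorem 2 at the given $\rho$; (b) quote (\ref{28'})--(\ref{28''}) for the joint probabilities; (c) quote (\ref{29}) for the product average; (d) note the one-to-one correspondence and the spectral rule are part of Theorem 2's conclusion. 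I do not anticipate a genuine obstacle here, since all the real work — the extension theorem, the construction of $\mathbb{M}$ and $\mu _{\rho }$, and the passage to product averages via (\ref{2}) — has already been done in Theorem 1, Proposition 1, Theorem 2 and Corollary 2; the only mild point to be careful about is that the representations inherited from Theorem 2 and Corollary 2 are stated for all states, so restricting to a single $\rho$ loses nothing and in particular one need not re-derive the affine property (\ref{ma}).
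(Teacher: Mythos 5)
Your proposal is correct and matches the paper's own treatment: the paper gives no separate proof of Proposition 2, presenting it as an immediate consequence of Theorem 2 and Corollary 2 ("theorem 2 and corollary 2 imply the following probability model"), which is exactly the specialization-to-a-fixed-$\rho$ bookkeeping you carry out. Your added remark that the nonnegativity of the relevant values of $\nu_{\rho}$ is automatic because they equal genuine quantum probabilities is a harmless (and accurate) elaboration of why this is a qHV model in the sense of the paper.
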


Recall that, in probability theory, a measurement situation (an experiment)
is generally described \cite{17} via the Kolmogorov probability model based
on the notion of a probability space \cite{7}, that is, a measure space $%
(\Omega ,\mathcal{F}_{\Omega },\tau )$ where a measure $\tau $ is a
probability one.

In view of this, proposition 2 points to the generality of the
quasi-classical probability (qHV) model introduced in Ref. 12 and
incorporating the Kolmogorov probability model as a particular case. This
new probability model is also outlined in Introduction.

The noncontextual qHV\ representation (\ref{28}) is, in particular,
fulfilled for all $N$-partite joint von Neumann measurements on an $N$%
-partite quantum state. This implies the following statement important for
quantum information applications.

\begin{proposition}
Every $N$-partite quantum state $\rho $ admits a local qHV (LqHV) model,
that is, for each state $\rho $ on a Hilbert space $\mathcal{H}_{1}\otimes
\cdots \otimes \mathcal{H}_{N},$ all the N-partite joint von Neumann
probabilities 
\begin{equation}
\mathrm{tr}[\rho \{\mathrm{P}_{X_{1}}(B_{1})\otimes \cdots \otimes \mathrm{P}%
_{X_{N}}(B_{N})\}]
\end{equation}%
admit the LqHV representation \cite{12} 
\begin{equation}
\mathrm{tr}[\rho \{\mathrm{P}_{X_{1}}(B_{1})\otimes \cdots \otimes \mathrm{P}%
_{X_{N}}(B_{N})\}]=\dint\limits_{\Omega }P_{X_{1}}(B_{1};\omega )\cdot
\ldots \cdot P_{X_{N}}(B_{N};\omega )\nu _{\rho }(\text{\textrm{d}}\omega )
\label{zz}
\end{equation}%
in terms of a single measure space $(\Omega ,\mathcal{F}_{\Omega },\nu
_{\rho }),$ with a normalized real-valued measure $\nu _{\rho },$ and
conditional probability distributions $P_{X_{n}}(\cdot $ $;\omega ),$ $%
n=1,...,N,$ each depending only on the corresponding quantum observable $%
X_{n}$ on a Hilbert space $\mathcal{H}_{n}$ at $n$-th site.
\end{proposition}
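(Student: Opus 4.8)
The plan is to obtain Proposition 3 as a direct specialization of Proposition 2 to a tensor-product Hilbert space; the only substantive point beyond Proposition 2 will be verifying that the resulting simulation is genuinely \emph{local}. First I would fix a state $\rho$ on $\mathcal{H}=\mathcal{H}_{1}\otimes\cdots\otimes\mathcal{H}_{N}$ and invoke Proposition 2 to get the measure space $(\Omega,\mathcal{F}_{\Omega},\nu_{\rho})$ and the family $\{f_{X}\mid X\in\mathfrak{X}_{\mathcal{H}}\}$ of random variables (concretely, one may take $(\Lambda,\mathcal{A}_{\Lambda})$ with the canonical projections $\pi_{X}$ of Section IV.A and $\nu_{\rho}=\mu_{\rho}$). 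For each local observable $X_{n}\in\mathfrak{X}_{\mathcal{H}_{n}}$ I would pass to its ampliation $\widetilde{X}_{n}:=\mathbb{I}_{\mathcal{H}_{1}}\otimes\cdots\otimes X_{n}\otimes\cdots\otimes\mathbb{I}_{\mathcal{H}_{N}}\in\mathfrak{X}_{\mathcal{H}}$, noting that $\mathrm{sp}\,\widetilde{X}_{n}=\mathrm{sp}\,X_{n}$, that $\widetilde{X}_{1},\ldots,\widetilde{X}_{N}$ act on distinct tensor factors and therefore mutually commute, and that their joint spectral measure factorizes as $\mathrm{P}_{\widetilde{X}_{1},\ldots,\widetilde{X}_{N}}(B_{1}\times\cdots\times B_{N})=\mathrm{P}_{X_{1}}(B_{1})\otimes\cdots\otimes\mathrm{P}_{X_{N}}(B_{N})$.

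Next I would apply representation (\ref{28}) of Proposition 2 to the mutually commuting collection $\{\widetilde{X}_{1},\ldots,\widetilde{X}_{N}\}$, which yields
\[
\mathrm{tr}[\rho\{\mathrm{P}_{X_{1}}(B_{1})\otimes\cdots\otimes\mathrm{P}_{X_{N}}(B_{N})\}]=\int_{\Omega}\chi_{f_{\widetilde{X}_{1}}^{-1}(B_{1})}(\omega)\cdot\ldots\cdot\chi_{f_{\widetilde{X}_{N}}^{-1}(B_{N})}(\omega)\,\nu_{\rho}(\mathrm{d}\omega).
\]
To bring this into the form (\ref{zz}) I would set $P_{X_{n}}(B_{n};\omega):=\chi_{f_{\widetilde{X}_{n}}^{-1}(B_{n})}(\omega)=\chi_{B_{n}}(f_{\widetilde{X}_{n}}(\omega))$ for $B_{n}\in\mathcal{B}_{\mathrm{sp}X_{n}}$. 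Since $f_{\widetilde{X}_{n}}(\omega)\in\mathrm{sp}\,\widetilde{X}_{n}=\mathrm{sp}\,X_{n}$ for every $\omega$, the set function $B_{n}\mapsto P_{X_{n}}(B_{n};\omega)$ is the Dirac probability measure $\delta_{f_{\widetilde{X}_{n}}(\omega)}$ on $(\mathrm{sp}X_{n},\mathcal{B}_{\mathrm{sp}X_{n}})$, hence a bona fide (deterministic) conditional probability distribution on the $n$-th outcome space, and the displayed identity becomes exactly (\ref{zz}).

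The step I expect to require the most care — and the only one not already contained in Proposition 2 — is establishing \emph{locality}: one must check that $P_{X_{n}}(\cdot\,;\omega)$ depends on the measurement at site $n$ \emph{only through} the observable $X_{n}$, and not on the observables chosen at the other sites. This holds because $f_{\widetilde{X}_{n}}$ is precisely the single random variable that the one-to-one correspondence $\Phi$ of Theorem 2 / Proposition 2 attaches to $\widetilde{X}_{n}$, while $\widetilde{X}_{n}$ is itself determined by $X_{n}$ alone; no data about the other sites' observables enters $P_{X_{n}}$. Consequently (\ref{zz}) is a local qHV representation over the common measure space $(\Omega,\mathcal{F}_{\Omega},\nu_{\rho})$ with the normalized (not necessarily positive) real-valued measure $\nu_{\rho}$ of Proposition 2, and the affine property $\sum\alpha_{j}\rho_{j}\overset{\mathfrak{R}}{\mapsto}\sum\alpha_{j}\nu_{\rho_{j}}$ is inherited verbatim from Theorem 2. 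This would complete the proof.
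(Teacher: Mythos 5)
Your proposal is correct and follows essentially the same route as the paper: ampliate each local observable to $\mathbb{I}\otimes\cdots\otimes X_{n}\otimes\cdots\otimes\mathbb{I}$, apply the noncontextual representation (\ref{28}) of proposition 2 to this mutually commuting collection, and read off the conditional distributions as the indicator functions $\chi_{f^{-1}_{\widetilde{X}_{n}}(B_{n})}(\omega)$, which depend only on the observable at site $n$. Your additional remarks (factorization of the joint spectral measure, the Dirac-measure interpretation, the explicit locality check) merely spell out what the paper leaves implicit.
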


\begin{proof}
For a state $\rho $ and quantum observables 
\begin{eqnarray}
&&X_{1}\otimes \mathbb{I}_{\mathcal{H}_{2}}\otimes \cdots \otimes \mathbb{I}%
_{\mathcal{H}_{N}}, \\
&&...,  \notag \\
&&\mathbb{I}_{\mathcal{H}_{1}}\otimes \cdots \otimes \mathbb{I}_{\mathcal{H}%
_{N-1}}\otimes X_{\mathcal{H}_{N}},  \notag
\end{eqnarray}%
on $\mathcal{H}_{1}\otimes \cdots \otimes \mathcal{H}_{N},$ representation (%
\ref{28}) has the LqHV form \cite{12}%
\begin{eqnarray}
\mathrm{tr}[\rho \{ \mathrm{P}_{X_{1}}(B_{1})\otimes \ldots \otimes \mathrm{P%
}_{X_{N}}(B_{N})\}] &=&\dint \limits_{\Omega }\chi
_{f_{X_{1}}^{-1}(B_{1})}(\omega )\cdot \ldots \cdot \chi
_{f_{X_{N}}^{-1}(B_{N})}(\omega )\nu _{\rho }(\text{\textrm{d}}\omega ), \\
B_{1} &\in &\mathcal{B}_{\mathrm{sp}X_{1}},...,B_{N}\in \mathcal{B}_{\mathrm{%
sp}X_{N}}.  \notag
\end{eqnarray}%
This proves the statement.\medskip
\end{proof}

We stress that, in theorem 2, a mapping $\Phi $ does not need to satisfy for
each Borel function $\varphi :\mathbb{R}\rightarrow \mathbb{R}$ and each
quantum observable $X$ the functional conditions (\ref{03}) - (\ref{02})
required by Kochen and Specker in Ref. 5. Moreover, in view of the
Kochen-Specker result \cite{5}, for a mapping $\Phi $ in theorem 2, these
conditions cannot be fulfilled whenever $\dim \mathcal{H}\geq 3.$

Accordingly, in a statistically noncontextual qHV\ model specified by
theorem 2 and corollary 2, the Kochen-Specker conditions (\ref{03}) - (\ref%
{02}) do not need to hold if $\dim \mathcal{H}=2$ and cannot be fulfilled
for $\dim \mathcal{H}\geq 3$. \emph{Nevertheless,} for each Hilbert space%
\textbf{\ }$\mathcal{H},$\textbf{\ }this new qHV\ model reproduces all the
properties of the von Neumann joint probabilities and product expectations
via "noncontextual" random variables. Also, in this new qHV model, all the
physically motivated average relations (\ref{st1}) - (\ref{st2}), argued in
question 2 of section II, follow explicitly from the noncontextual
representations (\ref{2d}), (\ref{29}).

\subsection{A context-invariant qHV model}

It is generally argued that if a mapping $\Psi $ from a set of random
variables on $(\Omega ,$ $\mathcal{F}_{\Omega })$ onto the set of all
quantum observable on $\mathcal{H}$\emph{\ }is non-injective,\emph{\ }then
the corresponding model reproducing in terms of random variables the
statistical properties of all quantum observables on $\mathcal{H}$ needs to
be \emph{contextual} in the sense that, under a joint von Neumann
measurement of mutually commuting quantum observables $X_{1},...,X_{n},$
each observable $X_{i}$ is modelled by a random variable specific for a
context of this joint measurement.

In what follows, we prove (theorem 3) that, for all joint von Neumann
measurements, the existence of a statistically noncontextual qHV model,
specified by theorem 2 and corollary 2, implies the existence of \emph{a
context-invariant qHV model} -- a model of a completely new type described,
in general, in Introduction and question 3 formulated in section II.

In this new qHV model, the functional condition (\ref{05}), required in
quantum foundations, is fulfilled for each Hilbert space, nevertheless, this
model is not contextual.

Consider first the following property proved in appendix C.

\begin{lemma}
Let $\Phi :\mathfrak{X}_{\mathcal{H}}\rightarrow \mathfrak{F}_{(\Omega ,%
\mathcal{F}_{\Omega })},$ $f_{X}:=\Phi (X)$ and $\{ \nu _{\rho },\forall
\rho \}$ be, correspondingly, the mapping, random variables and the measures
specified in theorem 2. Then%
\begin{eqnarray}
&&\nu _{\rho }\mathbb{(}f_{\varphi (X)}^{-1}(B)\cap
f_{Y_{1}}^{-1}(B_{1})\cap \cdots \cap f_{Y_{m}}^{-1}(B_{m}))  \label{pp} \\
&=&\nu _{\rho }\mathbb{((}\varphi \circ f_{X})^{-1}(B)\cap
f_{Y_{1}}^{-1}(B_{1})\cap \cdots \cap f_{Y_{m}}^{-1}(B_{m})),  \notag \\
B &\in &\mathcal{B}_{\mathrm{sp}\varphi (X)},\text{ \ \ }B_{i}\in \mathcal{B}%
_{\mathrm{sp}Y_{i}},\text{ \ \ \ }i=1,...,m\in \mathbb{N},  \notag
\end{eqnarray}%
for all states $\rho ,$ all Borel functions $\varphi :\mathbb{R}\rightarrow 
\mathbb{R}$ \ and all finite collections $\{X,Y_{1},...,Y_{m}\}$ of mutually
commuting quantum observables on $\mathcal{H}.$
\end{lemma}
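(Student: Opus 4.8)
The plan is to reduce everything to the uniqueness and consistency of the operator-valued measure $\mathbb{M}$ from Theorem 1, together with the observation that $\varphi(X)$ is, for commuting observables, itself one of the observables covered by that machinery. The key fact is that $f_X = \pi_X$ in the model constructed in the proof of Theorem 2, so $\nu_\rho(A) = \mathrm{tr}[\rho \mathbb{M}(A)]$ for $A \in \mathcal{A}_\Lambda$; hence it suffices to prove the operator identity
\begin{equation}
\mathbb{M}\bigl(\pi_{\varphi(X)}^{-1}(B)\cap \pi_{Y_1}^{-1}(B_1)\cap \cdots \cap \pi_{Y_m}^{-1}(B_m)\bigr)
= \mathbb{M}\bigl((\varphi\circ\pi_X)^{-1}(B)\cap \pi_{Y_1}^{-1}(B_1)\cap \cdots \cap \pi_{Y_m}^{-1}(B_m)\bigr),
\end{equation}
and then take $\mathrm{tr}[\rho \,\cdot\,]$ of both sides. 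First I would note that $(\varphi\circ\pi_X)^{-1}(B) = \pi_X^{-1}(\varphi^{-1}(B))$, so the right-hand set is $\pi_{(X,Y_1,\dots,Y_m)}^{-1}(\varphi^{-1}(B)\times B_1\times\cdots\times B_m)$, and by the defining relation \eqref{20} of $\mathbb{M}$ its $\mathbb{M}$-value is $\mathcal{P}_{(X,Y_1,\dots,Y_m)}(\varphi^{-1}(B)\times B_1\times\cdots\times B_m)$. Since $X,Y_1,\dots,Y_m$ mutually commute, this operator-valued measure is (the restriction of) the joint spectral measure, so this value equals $\mathrm{P}_X(\varphi^{-1}(B))\,\mathrm{P}_{Y_1}(B_1)\cdots\mathrm{P}_{Y_m}(B_m)$.

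For the left-hand side, the observables $\varphi(X), Y_1,\dots,Y_m$ also mutually commute (each is a Borel function of a member of a commuting family, or of the generating $Z$), so again $\mathbb{M}\bigl(\pi_{(\varphi(X),Y_1,\dots,Y_m)}^{-1}(B\times B_1\times\cdots\times B_m)\bigr) = \mathrm{P}_{\varphi(X)}(B)\,\mathrm{P}_{Y_1}(B_1)\cdots\mathrm{P}_{Y_m}(B_m)$. The remaining point is the spectral-calculus identity $\mathrm{P}_{\varphi(X)}(B) = \mathrm{P}_X(\varphi^{-1}(B))$, which is the standard statement that the spectral measure of $\varphi(X)$ is the pushforward of the spectral measure of $X$ under $\varphi$. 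Combining the two computations gives the desired operator identity, and applying the positive linear functional $\mathrm{tr}[\rho\,\cdot\,]$ yields \eqref{pp}. The additivity/finiteness hypotheses are harmless: everything is on the algebra $\mathcal{A}_\Lambda$, all operators involved are bounded self-adjoint, and $\mathrm{tr}[\rho\,\cdot\,]$ is well-defined and finite.

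The main obstacle — really the only nontrivial point — is making sure the bookkeeping about which observables sit together in a single tuple is airtight: one must invoke Lemma 1 (the consistency relations \eqref{13}, \eqref{14}) to pass freely between $\mathcal{P}_{(X,Y_1,\dots,Y_m)}$ and $\mathcal{P}_{(\varphi(X),Y_1,\dots,Y_m)}$ as restrictions of the same cylinder-set measure $\mathbb{M}$, and to know that the cylinder sets $\pi_X^{-1}(\varphi^{-1}(B))$ and $\pi_{\varphi(X)}^{-1}(B)$, though built from different coordinate projections on $\Lambda$, have the same $\mathbb{M}$-value because their images under the relevant finite-dimensional projections are matched by $\varphi^{-1}(B) \leftrightarrow B$ under the spectral pushforward. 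Once that identification is in place the proof is a two-line application of \eqref{20} on each side plus the pushforward formula for spectral measures; I would write it in that order: (1) reduce to an $\mathbb{M}$-identity via $\nu_\rho = \mathrm{tr}[\rho\mathbb{M}]$; (2) rewrite both sides via \eqref{20}/\eqref{21} as products of spectral projections using commutativity; (3) invoke $\mathrm{P}_{\varphi(X)}(B) = \mathrm{P}_X(\varphi^{-1}(B))$; (4) take traces.
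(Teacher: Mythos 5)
Your proof is correct and takes essentially the same route as the paper's: both sides are rewritten as a symmetrized/ordinary product of spectral projections (you via the operator-valued measure through (\ref{20})--(\ref{21}), the paper directly via (\ref{28''})), and the two are matched by the pushforward identity $\mathrm{P}_{\varphi(X)}(B)=\mathrm{P}_{X}(\varphi^{-1}(B))$ together with $(\varphi\circ f_{X})^{-1}(B)=f_{X}^{-1}(\varphi^{-1}(B))$ and the observation that $\varphi(X),Y_{1},\ldots,Y_{m}$ still mutually commute. The only cosmetic difference is that you prove the identity at the level of $\mathbb{M}$ before applying $\mathrm{tr}[\rho\,\cdot\,]$, whereas the paper works with $\nu_{\rho}$ throughout.
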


From relations (\ref{vn}), (\ref{28''}) and property (\ref{pp}) it follows
that if $f_{X},$ $f_{\varphi (X)}\in \Phi (\mathfrak{X}_{\mathcal{H}})$ are
random variables specified in theorem 2, then the random variables $%
f_{\varphi (X)}$ and $\varphi \circ f_{X}$ on $(\Omega ,\mathcal{F}_{\Omega
})$ equivalently represent the quantum observable $\varphi (X)$ \emph{under
all joint von Neumann measurements }regardless of their measurement
contexts. However,\emph{\ }for a Borel function $\varphi :\mathbb{R}%
\rightarrow \mathbb{R}$ and a quantum observable $X$, the random variable $%
\varphi \circ f_{X}$ \emph{does not need} to belong to the set $\Phi (%
\mathfrak{X}_{\mathcal{H}})$ and, moreover, to coincide with the random
variable $f_{\varphi (X)}$.

The following theorem (proved in appendix D) answers positively to question
3 put in section II.

\begin{theorem}
Let $\mathcal{H}$ be an arbitrary complex separable Hilbert space. There
exist:\newline
(i) a measurable space $(\Omega ,$ $\mathcal{F}_{\Omega });$\newline
(ii) a mapping $\Psi :\mathfrak{F}_{\mathfrak{X}_{\mathcal{H}}}\rightarrow 
\mathfrak{X}_{\mathcal{H}}$ from a set $\mathfrak{F}_{\mathfrak{X}_{\mathcal{%
H}}}$ of random variables $g\ $on $(\Omega ,$ $\mathcal{F}_{\Omega })$ onto
the set $\mathfrak{X}_{\mathcal{H}}$ of all quantum observables on $\mathcal{%
H}$, with the spectral correspondence rule $g(\Omega )=\mathrm{sp}X$ for
each \cite{uu} $g\in \Psi ^{-1}(\{X\})$ and the functional relations 
\begin{eqnarray}
\varphi \circ g &\in &\mathfrak{F}_{\mathfrak{X}_{\mathcal{H}}},\text{ \ \ \ 
}\Psi (\varphi \circ g)=\varphi \circ \Psi (g)=\varphi \circ X,  \label{ff}
\\
\forall g &\in &\Psi ^{-1}(\{X\}),  \notag
\end{eqnarray}%
for all Borel functions $\varphi :\mathbb{R}\rightarrow \mathbb{R}$ and all
quantum observables $X;$\newline
such that, to each quantum state $\rho $ on $\mathcal{H},$ there corresponds 
$(\rho \overset{\mathfrak{R}}{\mapsto }\nu _{\rho })$ a unique normalized
real-valued measure $\nu _{\rho }$ on $(\Omega ,\mathcal{F}_{\Omega })$
satisfying the context-invariant representation%
\begin{eqnarray}
\mathrm{tr}[\rho \{ \mathrm{P}_{X_{1}}(B_{1})\cdot \ldots \cdot \mathrm{P}%
_{X_{n}}(B_{n})\}] &=&\nu _{\rho }\left( g_{1}^{-1}(B_{1})\cap \cdots \cap
g_{n}^{-1}(B_{n})\right) ,  \label{kl} \\
\forall g_{i} &\in &\Psi ^{-1}(\{X_{i}\}),  \notag
\end{eqnarray}%
for all sets $B_{i}\in \mathcal{B}_{\mathrm{sp}X_{i}},$\ $i=1,...,n,$ and
all collections $\{X_{1},...,X_{n}\} \subset \mathfrak{X}_{\mathcal{H}},$ $%
n\in \mathbb{N},$ of mutually commuting quantum observables on $\mathcal{H}.$
If $\rho _{j}\overset{\mathfrak{R}}{\mapsto }\nu _{\rho _{j}},$ $%
j=1,...,m<\infty ,$ then $\sum \alpha _{j}\rho _{j}\overset{\mathfrak{R}}{%
\mapsto }\sum \alpha _{j}\nu _{\rho _{j}},$ for all $\alpha _{j}>0,$ \ $\sum
\alpha _{j}=1.$
\end{theorem}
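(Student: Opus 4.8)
The plan is to build the context-invariant model on top of the statistically noncontextual model of Theorem~2 by enlarging the family of random variables: for each quantum observable $X$ we keep the canonical projection $\pi_X$ from Theorem~2 but additionally we adjoin to the model all random variables of the form $\varphi\circ\pi_Y$ whenever $\varphi(Y)=X$ for some Borel $\varphi:\mathbb{R}\rightarrow\mathbb{R}$ and some $Y\in\mathfrak{X}_{\mathcal{H}}$. More precisely, I would take the same measurable space $(\Omega,\mathcal{F}_{\Omega})=(\Lambda,\mathcal{A}_{\Lambda})$ and the same measures $\nu_{\rho}=\mu_{\rho}$ as in Theorem~2, define $\mathfrak{F}_{\mathfrak{X}_{\mathcal{H}}}$ to be the set of all random variables $g$ on $(\Omega,\mathcal{F}_{\Omega})$ that can be written as $g=\varphi\circ\pi_Y$ for some $Y$ and some Borel $\varphi$ with $\varphi(Y)\in\mathfrak{X}_{\mathcal{H}}$, and define $\Psi(g):=\varphi(Y)$ for such a representation. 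Since $\pi_Y$ itself arises as $\mathrm{id}\circ\pi_Y$, the old random variables $\{\pi_X\}$ all lie in $\mathfrak{F}_{\mathfrak{X}_{\mathcal{H}}}$ and $\Psi(\pi_X)=X$, so $\Psi$ is onto $\mathfrak{X}_{\mathcal{H}}$, and $\Psi^{-1}(\{X\})$ consists of all $\varphi\circ\pi_Y$ with $\varphi(Y)=X$.

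First I would verify that $\Psi$ is well defined, i.e.\ that the value $\varphi(Y)$ does not depend on the chosen representation $g=\varphi\circ\pi_Y$. This is the first genuine obstacle: one must show that if $\varphi_1\circ\pi_{Y_1}=\varphi_2\circ\pi_{Y_2}$ as functions on $\Lambda$, then $\varphi_1(Y_1)=\varphi_2(Y_2)$ as quantum observables. I expect this to follow from the spectral correspondence $\pi_Y(\Lambda)=\mathrm{sp}\,Y$ together with the separating property of the measures $\{\mu_{\rho}\}$: applying representation~(\ref{24}) with $n=1$ to the observable $\varphi_1(Y_1)$ and using that $\pi_{\varphi_1(Y_1)}$ and $\varphi_1\circ\pi_{Y_1}$ have the same distribution under every $\mu_{\rho}$ (a special case of Lemma~2 with $m=0$), one identifies $\mathrm{tr}[\rho\,\mathrm{P}_{\varphi_1(Y_1)}(B)]=\mathrm{tr}[\rho\,\mathrm{P}_{\varphi_2(Y_2)}(B)]$ for all $\rho$ and all Borel $B$, hence the spectral measures, hence the operators, coincide. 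Next, the spectral correspondence rule $g(\Omega)=\mathrm{sp}\,X$ for $g\in\Psi^{-1}(\{X\})$ is immediate, since $g=\varphi\circ\pi_Y$ has range $\varphi(\mathrm{sp}\,Y)=\mathrm{sp}\,\varphi(Y)=\mathrm{sp}\,X$.

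Then I would check the functional relations~(\ref{ff}). For $g=\varphi\circ\pi_Y\in\Psi^{-1}(\{X\})$ and any Borel $\psi$, one has $\psi\circ g=(\psi\circ\varphi)\circ\pi_Y$, which is again of the admissible form provided $(\psi\circ\varphi)(Y)\in\mathfrak{X}_{\mathcal{H}}$; since $(\psi\circ\varphi)(Y)=\psi(\varphi(Y))=\psi(X)$ is a legitimate quantum observable, $\psi\circ g\in\mathfrak{F}_{\mathfrak{X}_{\mathcal{H}}}$ and $\Psi(\psi\circ g)=\psi(X)=\psi\circ\Psi(g)$. This is essentially bookkeeping once the set $\mathfrak{F}_{\mathfrak{X}_{\mathcal{H}}}$ is defined through the closure property just described.

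Finally I would establish the context-invariant representation~(\ref{kl}). Given mutually commuting $X_1,\dots,X_n$ and arbitrary representatives $g_i=\varphi_i\circ\pi_{Y_i}\in\Psi^{-1}(\{X_i\})$, the hard part is that the $Y_i$ need not be mutually commuting and need not commute with the $X_j$, so Theorem~2 does not apply directly to the tuple $(Y_1,\dots,Y_n)$. The resolution is to apply Lemma~2 iteratively to replace each $\varphi_i\circ\pi_{Y_i}$ by $\pi_{\varphi_i(Y_i)}=\pi_{X_i}$ inside the set $g_1^{-1}(B_1)\cap\cdots\cap g_n^{-1}(B_n)$ without changing its $\nu_{\rho}$-measure: one peels off one factor at a time, using~(\ref{pp}) with the single observable $X=Y_i$ and the remaining factors playing the role of the $Y_j$'s and their $B_j$'s. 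The point is that~(\ref{pp}) only requires $Y_i$ together with whatever observables already appear in the remaining intersection to be mutually commuting — but since $f_{Y_j}^{-1}(B_j)$ there is written in terms of the original $\pi_{Y_j}$, I must be careful about the order of replacement and the commutativity hypothesis of Lemma~2. I expect the correct argument to first rewrite $g_i^{-1}(B_i)=\pi_{Y_i}^{-1}(\varphi_i^{-1}(B_i))$, then note $(\varphi_i\circ\pi_{Y_i})^{-1}(B_i)=(\varphi_i\circ\pi_{Y_i})^{-1}(B_i)$ and invoke Lemma~2 in the form that $\nu_{\rho}((\varphi_i\circ\pi_{Y_i})^{-1}(B_i)\cap C)=\nu_{\rho}(\pi_{\varphi_i(Y_i)}^{-1}(B_i)\cap C)$ whenever $Y_i$ commutes with all observables whose $\pi$'s appear in $C$; applying this first with $C$ built from the already-converted factors $\pi_{X_1},\dots,\pi_{X_{i-1}}$ (which all commute with each other and, crucially, one needs them to commute with $Y_i$ — this is where the argument is delicate and presumably appendix~D handles it by a more careful induction, perhaps converting in a single combined step using the commuting family $\{X_1,\dots,X_n\}$ and a single auxiliary $Z$ with $X_i=\varphi_i(Z)$). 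Once all factors are converted, the intersection becomes $\pi_{X_1}^{-1}(B_1)\cap\cdots\cap\pi_{X_n}^{-1}(B_n)$ and Theorem~2, equation~(\ref{28''}), gives exactly $\mathrm{tr}[\rho\{\mathrm{P}_{X_1}(B_1)\cdots\mathrm{P}_{X_n}(B_n)\}]$, independently of the choice of representatives $g_i$ — which is precisely context-invariance. The affine-map property $\sum\alpha_j\rho_j\mapsto\sum\alpha_j\nu_{\rho_j}$ and uniqueness of $\nu_{\rho}$ are inherited verbatim from Theorem~2.
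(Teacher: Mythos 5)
Your construction of $(\Omega ,\mathcal{F}_{\Omega })$, of the set $\mathfrak{F}_{\mathfrak{X}_{\mathcal{H}}}$ consisting of all $\varphi \circ \pi _{Y}$ with $\varphi (Y)\in \mathfrak{X}_{\mathcal{H}}$, of the mapping $\Psi $, the well-definedness argument via the separating family $\{ \mu _{\rho }\}$, and the verification of the functional relations (\ref{ff}) all coincide with what is done in appendix D. The gap is exactly where you flag it: the context-invariant representation (\ref{kl}) for representatives $g_{i}=\varphi _{i}\circ \pi _{Y_{i}}$ whose underlying observables $Y_{1},...,Y_{n}$ need not mutually commute. Your proposed iterative ``peeling'' via lemma 2 cannot be carried out as described, because lemma 2 is stated and proved only for a mutually commuting family $\{X,Y_{1},...,Y_{m}\}$ (its proof runs through (\ref{28''}), which presupposes commutativity), and in your induction the not-yet-converted factors involve observables $Y_{j}$ that need not commute with $Y_{i}$. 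The alternative you float --- a single auxiliary $Z$ with $X_{i}=\varphi _{i}(Z)$ --- does not help either, since the given representatives are built from arbitrary $Y_{i}$ unrelated to $Z$. Deferring this step to ``presumably the appendix handles it'' leaves the central claim of the theorem unproved.

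The missing idea is to bypass the commuting-family lemma entirely and use proposition 1 directly: relation (\ref{24}) holds for \emph{arbitrary} finite collections of observables, commuting or not, which is precisely why the measures $\mu _{\rho }$ were built to represent symmetrized products. Writing $g_{i}^{-1}(B_{i})=\pi _{Y_{i}}^{-1}(\varphi _{i}^{-1}(B_{i}))$ and applying (\ref{24}) to the tuple $(Y_{1},...,Y_{n})$ gives
\begin{equation*}
\mu _{\rho }\left( g_{1}^{-1}(B_{1})\cap \cdots \cap g_{n}^{-1}(B_{n})\right) =\frac{1}{n!}\mathrm{tr}\left[ \rho \left\{ \mathrm{P}_{Y_{1}}(\varphi _{1}^{-1}(B_{1}))\cdot \ldots \cdot \mathrm{P}_{Y_{n}}(\varphi _{n}^{-1}(B_{n}))\right\} _{\mathrm{sym}}\right] .
\end{equation*}
By the transformation rule (\ref{wd}), $\mathrm{P}_{Y_{i}}(\varphi _{i}^{-1}(B_{i}))=\mathrm{P}_{\varphi _{i}(Y_{i})}(B_{i})=\mathrm{P}_{X_{i}}(B_{i})$, and since the $X_{i}$ mutually commute all $n!$ terms of the symmetrization coincide, so the right-hand side collapses to $\mathrm{tr}[\rho \,\mathrm{P}_{X_{1}}(B_{1})\cdot \ldots \cdot \mathrm{P}_{X_{n}}(B_{n})]$ --- that is, (\ref{kl}) holds independently of the chosen $g_{i}$. (If some of the $Y_{i}$ coincide, first merge the corresponding preimages via $\pi _{Y}^{-1}(C_{1})\cap \pi _{Y}^{-1}(C_{2})=\pi _{Y}^{-1}(C_{1}\cap C_{2})$ and apply (\ref{24}) to the reduced tuple.) Without this step your argument establishes (\ref{kl}) only for the canonical representatives $\pi _{X_{i}}$, i.e.\ it reproves theorem 2 rather than theorem 3.
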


Theorem 3 implies the following statement proved in appendix D.

\begin{corollary}
In the setting of theorem 3, for all states $\rho $ and all finite
collections $\{X_{1},...,X_{n}\} \subset \mathfrak{X}_{\mathcal{H}}$ of
mutually commuting quantum observables on $\mathcal{H},$ the
context-invariant representation 
\begin{eqnarray}
\mathrm{tr}[\rho \psi (X_{1},...,X_{n})] &=&\dint \limits_{\Omega }\psi
(g_{1}(\omega ),...,g_{n}(\omega ))\nu _{\rho }\left( \mathrm{d}\omega
\right) ,  \label{gg'} \\
\forall g_{i} &\in &\Psi ^{-1}(\{X_{i}\}),\text{ \ \ }i=1,...,n,  \notag
\end{eqnarray}%
holds for all bounded Borel functions $\psi :\mathbb{R}^{n}\rightarrow 
\mathbb{R}$ and the context-invariant representation 
\begin{eqnarray}
\mathrm{tr}[\rho (X_{1}\cdot \ldots \cdot X_{n})] &=&\dint \limits_{\Omega
}g_{1}(\omega )\cdot \ldots \cdot g_{n}(\omega )\nu _{\rho }\left( \mathrm{d}%
\omega \right) ,  \label{gg} \\
\forall g_{i} &\in &\Psi ^{-1}(\{X_{i}\}),\text{ \ \ }i=1,...,n,  \notag
\end{eqnarray}%
is fulfilled whenever mutually commuting quantum observables $X_{1},\ldots
,X_{n}$ are bounded.
\end{corollary}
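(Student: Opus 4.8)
The plan is to derive both representations (\ref{gg'}) and (\ref{gg}) from the context-invariant joint-probability relation (\ref{kl}) of theorem 3, following the same route that led from theorem 2 to corollary 2; the one extra point is that (\ref{kl}) forces the relevant pushforward measure to be independent of the chosen representatives. So fix a state $\rho$, a finite collection $\{X_1,\ldots,X_n\}\subset\mathfrak{X}_{\mathcal{H}}$ of mutually commuting quantum observables, arbitrary $g_i\in\Psi^{-1}(\{X_i\})$, and write $g_{(1,\ldots,n)}:=(g_1,\ldots,g_n)$. First I would note that, by (\ref{kl}) applied to all rectangles $B_1\times\cdots\times B_n$ with $B_i\in\mathcal{B}_{\mathrm{sp}X_i}$, the set function
\[
B_1\times\cdots\times B_n\mapsto\nu_\rho(g_1^{-1}(B_1)\cap\cdots\cap g_n^{-1}(B_n))
\]
agrees on the product algebra $\mathcal{F}_{\mathrm{sp}X_1\times\cdots\times\mathrm{sp}X_n}$ with $\mathrm{tr}[\rho\,\mathrm{P}_{X_1,\ldots,X_n}(\cdot)]$; in particular it is nonnegative and does not depend on which $g_i$ were chosen. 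Since the product algebra generates the Borel $\sigma$-algebra $\mathcal{B}_{\mathrm{sp}X_1\times\cdots\times\mathrm{sp}X_n}$ and $\mathrm{tr}[\rho\,\mathrm{P}_{X_1,\ldots,X_n}(\cdot)]$ is $\sigma$-additive there, this probability measure is the unique $\sigma$-additive extension of that set function, i.e.\ the genuine $\nu_\rho$-pushforward along $g_{(1,\ldots,n)}$.

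Next I would use the bounded functional calculus. For a bounded Borel $\psi:\mathbb{R}^n\to\mathbb{R}$ the observable $\psi(X_1,\ldots,X_n)$ of (\ref{qww}) is a bounded self-adjoint operator, so $\rho\,\psi(X_1,\ldots,X_n)$ is trace class and
\[
\mathrm{tr}[\rho\,\psi(X_1,\ldots,X_n)]=\int_{\mathbb{R}^n}\psi(x_1,\ldots,x_n)\,\mathrm{tr}[\rho\{\mathrm{P}_{X_1}(\mathrm{d}x_1)\cdot\ldots\cdot\mathrm{P}_{X_n}(\mathrm{d}x_n)\}].
\]
Combining this with the first paragraph and the image-measure (change-of-variables) identity along $g_{(1,\ldots,n)}$ gives
\[
\mathrm{tr}[\rho\,\psi(X_1,\ldots,X_n)]=\int_{\Omega}\psi(g_1(\omega),\ldots,g_n(\omega))\,\nu_\rho(\mathrm{d}\omega),
\]
which is (\ref{gg'}); since the pushforward does not depend on the $g_i$, neither does the right-hand side. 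For (\ref{gg}) I would take $\psi(x_1,\ldots,x_n)=x_1\cdots x_n$: when the $X_i$ are bounded, $\mathrm{sp}X_1\times\cdots\times\mathrm{sp}X_n$ is bounded, $X_1\cdots X_n=\psi(X_1,\ldots,X_n)$ by commutativity, and $\rho\,X_1\cdots X_n$ is trace class, so (\ref{gg}) follows from (\ref{kl}) and the spectral theorem (\ref{2}) exactly as (\ref{29}) was obtained from (\ref{28''}) in corollary 2 (here the spectral correspondence rule $g_i(\Omega)=\mathrm{sp}X_i$ together with $\mathrm{P}_{X_i}$ being supported on $\mathrm{sp}X_i$ lets one replace $\psi$ by any bounded Borel function agreeing with $x_1\cdots x_n$ on $\mathrm{sp}X_1\times\cdots\times\mathrm{sp}X_n$).

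The point that needs care rather than routine computation is the meaning of $\int_{\Omega}\psi(g_1,\ldots,g_n)\,\nu_\rho(\mathrm{d}\omega)$: the measure $\nu_\rho$ is only finitely additive and, in general, signed on the algebra $\mathcal{F}_\Omega$, and for a Borel $B\subseteq\mathrm{sp}X_1\times\cdots\times\mathrm{sp}X_n$ not lying in the product algebra the set $g_{(1,\ldots,n)}^{-1}(B)$ need not belong to $\mathcal{F}_\Omega$, so $\psi\circ g_{(1,\ldots,n)}$ need not be $\mathcal{F}_\Omega$-measurable. This is settled exactly by the observation of the first paragraph: the $\nu_\rho$-pushforward along $g_{(1,\ldots,n)}$ is a \emph{nonnegative} finitely additive measure on the product algebra that coincides there with the trace against $\rho$ of a projection-valued product measure, hence extends uniquely and $\sigma$-additively to all of $\mathcal{B}_{\mathrm{sp}X_1\times\cdots\times\mathrm{sp}X_n}$; one then \emph{defines} $\int_{\Omega}\psi(g_1,\ldots,g_n)\,\nu_\rho(\mathrm{d}\omega)$ to be the Lebesgue integral of $\psi$ against that extension, so that the change-of-variables step above is definitional. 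For sufficiently regular $\psi$ — in particular for the continuous $\psi(x_1,\ldots,x_n)=x_1\cdots x_n$ on the bounded set $\mathrm{sp}X_1\times\cdots\times\mathrm{sp}X_n$ — no such convention is needed, since there $\psi$ is a uniform limit of simple functions over the product algebra, $\psi\circ g_{(1,\ldots,n)}$ is then $\mathcal{F}_\Omega$-measurable up to uniformly small error, and the finite total variation of $\nu_\rho$ on $\mathcal{F}_\Omega$ lets the limit be taken directly. This is the same convention already in force in the proof of corollary 2, and under it the displayed identities — hence representations (\ref{gg'}) and (\ref{gg}) — are immediate. \medskip
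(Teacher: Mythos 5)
Your proposal is correct and follows essentially the same route as the paper: the paper derives (\ref{gg'}) and (\ref{gg}) from the context-invariant representation (\ref{kl}) together with (\ref{qww}) and the spectral theorem (\ref{2}), in exact parallel with the passage from (\ref{28''}) to (\ref{2d}) and (\ref{29}) in corollary 2, which is precisely your argument. Your additional remarks --- that (\ref{kl}) forces the pushforward along $(g_{1},...,g_{n})$ to coincide with $\mathrm{tr}[\rho \mathrm{P}_{X_{1},...,X_{n}}(\cdot )]$ independently of the chosen representatives $g_{i}\in \Psi ^{-1}(\{X_{i}\})$, and the careful treatment of integration against the finitely additive signed measure $\nu _{\rho }$ on the algebra $\mathcal{F}_{\Omega }$ --- are sound refinements of details the paper leaves implicit, not a different method.
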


\begin{remark}
Representations (\ref{kl}) - (\ref{gg}) are context-invariant in the sense
that, regardless of a context of a joint von Neumann measurement, into the
right-hand sides of these representations, each of random variables $%
g_{i}\in \Psi ^{-1}(\{X_{i}\})$, modelling on $(\Omega ,\mathcal{F}_{\Omega
})$ a quantum observable $X_{i}$, can be equivalently substituted\textbf{.}
\end{remark}

From theorem 3 and corollary 3 it follows that, for each Hilbert space $%
\mathcal{H},$ the Hilbert space description of all the joint von Neumann
measurements (equivalently, all the quantum observables on $\mathcal{H}$)
can be reproduced via a qHV model, where a quantum observable $X$ on $%
\mathcal{H}$ can be represented on $(\Omega ,\mathcal{F}_{\Omega })$ by a
variety of random variables, but, \emph{due to the} \emph{context-invariant
representations} (\ref{kl}) - (\ref{gg}), each of these random variables 
\emph{equivalently} models $X$ under all joint von Neumann measurements,
regardless of their measurement contexts. In this new qHV model, the
functional condition (\ref{05}) required in quantum foundations is fulfilled
for each Hilbert space and constitutes condition (\ref{ff}) in theorem 3.
All the relations

Therefore, the qHV model specified by theorem 3 and corollary 3 is \emph{%
context-invariant}\textbf{\emph{\ }}according to the terminology outlined in
Introduction and section II.

The specific example of such a qHV\ model is given in the proof of theorem 3.

In view of relation (\ref{ff}) and the Kochen-Specker result \cite{5}, for $%
\dim \mathcal{H}\geq 3,$ a mapping $\Psi $ in theorem 3 cannot be injective.
For $\dim \mathcal{H}=2,$ this mapping does not need to be injective due to
the setting of theorem 3.

Accordingly, a context-invariant qHV model specified by theorem 3 cannot be
noncontextual for $\dim \mathcal{H}\geq 3$ and does not need to be
noncontextual if $\dim \mathcal{H}=2.$\emph{\ Nevertheless}, in this new
model (in contrast to a contextual one), all the von Neumann joint
probabilities and product expectations are reproduced via \emph{the
context-invariant representations} (\ref{kl}) - (\ref{gg}).

Thus, in contrast to the wide-spread opinion, a model reproducing the
statistical properties of all quantum observables via random variables \emph{%
does not need to be contextual }whenever\emph{, }in this model, a mapping $%
\Psi $ from a set of random variables onto the set of all quantum
observables is non-injective.\textbf{\ }

\section{Conclusions}

In the present paper, we have introduced the two new quasi hidden variable
(qHV) models correctly reproducing \emph{in measure theory terms} the
Hilbert space description of all joint von Neumann measurements for an
arbitrary Hilbert space.\emph{\ }These new models answer positively to
either of three questions formulated in section II and are important for
both -- quantum foundations and quantum applications.

For this aim, we had first to generalize (lemma 3) some items of the
Kolmogorov extension theorem \cite{7, 17} to the case of consistent
operator-valued measures. This generalization allowed us to express (theorem
1) all symmetrized finite products of the spectral measures of quantum
observables via a unique self-adjoint operator-valued measure and the
specific random variables defined on some specially constructed measurable
space.

Based on these new mathematical results, we further analyzed modelling of
all the von Neumann joint probabilities and product expectations in qHV \cite%
{12, 13} terms. We have proved that, for each Hilbert space, the Hilbert
space description of all joint von Neumann measurements can be reproduced
via either of the two new quasi hidden variable (qHV) models, \emph{%
statistically} \emph{noncontextual and context-invariant.}

In both of these new qHV\ models, all the von Neumann joint probabilities
are represented (theorems 2, 3) via nonnegative values of real-valued
measures and all the quantum product expectations -- via the qHV average
(corollaries 2, 3) of the product of the corresponding random variables.

For $\dim \mathcal{H\geq }4$, the HV\ versions of these new models cannot
exist (see section II).

\emph{In a statistically noncontextual qHV\ mode}$\emph{l}$ specified by
theorem 2 and corollary 2, each quantum observable $X$ on a Hilbert space $%
\mathcal{H}$ is represented on a measurable space $(\Omega ,\mathcal{F}%
_{\Omega })$ \emph{by only one} random variable $f_{X}$, modelling this
observable $X$ under all joint von Neumann measurements. In this new model,
the Kochen-Specker \cite{5} functional conditions (\ref{03}) - (\ref{02}) on
a mapping $X\mapsto f_{X}$ do not need to hold and, moreover, in view of the
Kochen-Specker result \cite{5}, cannot hold whenever $\dim \mathcal{H}\geq 3$%
.

\emph{Nevertheless,} this new qHV model reproduces all the von Neumann joint
probabilities and product expectations via the \emph{noncontextual} \emph{%
representations} (\ref{28'}) - (\ref{29}). Moreover, in this model, the
Kochen-Specker functional conditions are satisfied in average -- in the
sense of relations (\ref{st1}) - (\ref{st2}) discussed in section II.

The specific example of a statistically noncontextual qHV\ model for all the
joint von Neumann measurements is given in the proof of theorem 2.

The proved existence of a statistically noncontextual qHV model, in
particular, implies (proposition 3) that \emph{every }$N$\emph{-partite
quantum state admits a local qHV (LqHV) model} introduced in Ref. 10.

This new result is particularly important for quantum applications since it
is specifically the qHV approach that allowed us to construct \cite{12} the
quantum analogs of Bell-type inequalities \cite{11} and to find \cite{12}
the new exact upper bounds on violation of a Bell-type inequality by a $N$%
-partite quantum state -- the problem which has been intensively discussed
in the literature since the publication \cite{tsirelson} of Tsirelson and
which is now important for a variety of quantum information processing tasks.

\emph{In a context-invariant qHV\ model} specified by\ theorem 3 and
corollary 3, a quantum observable $X$\ on a Hilbert space $\mathcal{H}$%
\textbf{\ }can be modelled on a measurable space\textbf{\ }$(\Omega ,%
\mathcal{F}_{\Omega })$\textbf{\ }by a variety of random variables
satisfying the functional condition (\ref{ff}) (equivalently, (\ref{05}))
required in quantum foundations, but each of these random variables
equivalently represents $X$\ under all joint von Neumann measurements,
independently of their measurement contexts.\textbf{\ }

For\textbf{\ }$\dim \mathcal{H}\geq 3,$ a context-invariant qHV\ model
cannot be noncontextual. For\textbf{\ }$\dim \mathcal{H}=2,$\textbf{\ }this
qHV\ model does not need to be noncontextual. \emph{Nevertheless}, in
contrast to a contextual HV model, this new qHV model reproduces all the von
Neumann joint probabilities and product expectations\textbf{\ }via \emph{the
context-invariant representations} (\ref{kl}) - (\ref{gg}) where, regardless
of a context of a joint von Neumann measurement, each of random variables
modelling a quantum observable $X_{i}$ can be equivalently substituted into
the right-hand sides.

The specific example of a context-invariant qHV model for all the joint von
Neumann measurements is presented in the proof of theorem 3.

The\ proved\ existence for each Hilbert space of\emph{\ a context-invariant
qHV\ model\ negates }$\emph{t}$\emph{he general opinion} that, in terms of
random variables satisfying the functional condition (\ref{05}) required in
quantum foundations the Hilbert space description of all the joint von
Neumann measurements (equivalently, all the quantum observables) for $\dim 
\mathcal{H}\geq 3$ can be reproduced only contextually.

We stress that it is specifically the real-valued measures and the random
variables introduced in proposition 1 that allowed us to prove the existence
for all the joint von Neumann measurements of the noncontextual
representations (\ref{28'}) - (\ref{29}) and the context-invariant
representations (\ref{kl}) - (\ref{gg}). Accordingly, the appearance of
these real-valued measures in a model is due to our accounting of the
structure of projection-valued measures describing joint von Neumann
measurements.

The results of the present paper also underline the generality of the
quasi-classical probability model proposed in Ref. 12.

\begin{acknowledgement}
I am very grateful to Professor V. N. Samarov for his helpful and
encouraging comments. The discussions with Professor M. E. Shirokov are much
appreciated.
\end{acknowledgement}

\appendix

\section{proof of lemma 1}

Relation (\ref{13}) follows explicitly from the symmetrized form of the
right-hand side in (\ref{9_}). In order to prove (\ref{14}), let us first
take a collection $\{X_{1},...,X_{n}\}$ of bounded quantum observables with
discrete spectrums. In this case, the measure $\mathcal{P}%
_{(X_{1},...,X_{_{n}})}$ is given by representation (\ref{9}) and taking
into the account that $\mathrm{P}_{X_{i}}(\mathrm{sp}X_{i})=\mathbb{I}_{%
\mathcal{H}},$ we have:%
\begin{eqnarray}
&&\mathcal{P}_{(X_{1},...,X_{_{n}})}\left( \{(x_{1},...,x_{n})\in \mathrm{sp}%
X_{1}\times \cdots \times \mathrm{sp}X_{n}\mid (x_{i_{1}},...,x_{i_{k}})\in
F\}\right)  \TCItag{A1}  \label{A1} \\
&=&\frac{1}{n!}\dsum\limits_{(x_{i_{1}},...,x_{i_{k}})\in F}\text{ }\left\{ 
\text{ }\mathrm{P}_{X_{1}}(\{x_{1}\})\cdot \ldots \cdot \mathrm{P}%
_{X_{n}}(\{x_{n}\})\right\} _{\mathrm{sym}}  \notag \\
&=&\frac{1}{k!}\dsum\limits_{(x_{i_{1}},...,x_{i_{k}})\in F}\left\{ \mathrm{P%
}_{X_{i_{1}}}(\{x_{i_{1}}\})\cdot \ldots \cdot \mathrm{P}_{X_{i_{k}}}(%
\{x_{i_{k}}\})\right\} _{\mathrm{sym}}  \notag \\
&=&\mathcal{P}_{(X_{i_{1}},...,X_{i_{k}})}\left( F\right) .  \notag
\end{eqnarray}

For an arbitrary collection $\{X_{1},...,X_{n}\}$ $\subset \mathfrak{X}_{%
\mathcal{H}}$ of quantum observables on $\mathcal{H},$ let $\mathcal{E}$ be
the set of all rectangles $E:=$ $B_{i_{1}}\times \cdots \times B_{i_{k}}$
with measurable sides $B_{i}\in \mathcal{B}_{\mathrm{sp}X_{i}}.$ Since the
product algebra $\mathcal{F}_{\mathrm{sp}X_{i_{1}}\times \cdots \times 
\mathrm{sp}X_{i_{k}}}$ consists of all finite unions of mutually disjoint
rectangles from $\mathcal{E}$, every set $F\in \mathcal{F}_{\mathrm{sp}%
X_{i_{1}}\times \cdots \times \mathrm{sp}X_{i_{k}}}$ admits a finite
decomposition 
\begin{equation}
F=\underset{m=1,...,M}{\cup }E_{m},\text{ \ \ }E_{m_{1}}\cap
E_{m_{2}}=\varnothing ,\ \ E_{m}\in \mathcal{E},\text{ \ }M<\infty . 
\tag{A2}  \label{A2}
\end{equation}%
Taking into the account that $\mathcal{P}_{(X_{1},...,X_{_{n}})}$ and $%
\mathcal{P}_{(X_{i_{1}},...,X_{i_{k}})}$ are finitely additive measures and
also, relations (\ref{A2}), (\ref{9_}) and $\mathrm{P}_{X_{i}}(\mathrm{sp}%
X_{i})=\mathbb{I}_{\mathcal{H}},$ we have:%
\begin{eqnarray}
&&\mathcal{P}_{(X_{1},...,X_{_{n}})}\left( \{(x_{1},...,x_{n})\in \mathrm{sp}%
X_{1}\times \cdots \times \mathrm{sp}X_{n}\mid (x_{i_{1}},...,x_{i_{k}})\in
F\} \right)  \TCItag{A3}  \label{A3} \\
&=&\sum_{m=1,...,M}\mathcal{P}_{(X_{1},...,X_{_{n}})}\left(
\{(x_{1},...,x_{n})\in \mathrm{sp}X_{1}\times \cdots \times \mathrm{sp}%
X_{n}\mid (x_{i_{1}},...,x_{i_{k}})\in E_{m}\} \right)  \notag \\
&=&\sum_{m=1,...,M}\mathcal{P}_{(X_{i_{1}},...,X_{i_{k}})}\left( E_{m}\right)
\notag \\
&=&\mathcal{P}_{(X_{i_{1}},...,X_{i_{k}})}\left( F\right) .  \notag
\end{eqnarray}%
This proves lemma 1.

\section{a generalization of the Kolmogorov extension theorem}

In this appendix, we generalize (lemma 3) to the case of consistent
operator-valued measures some items of the Kolmogorov consistency theorem 
\cite{7, 17} for a family of consistent probability measures (\ref{15}).

For an uncountable index set $T,$ consider a family $\{(\Lambda _{t},%
\mathcal{F}_{\Lambda _{t}}),$ $t\in T\}$ of measurable spaces, where each $%
\Lambda _{t}$ is a non-empty set and $\mathcal{F}_{\Lambda _{t}}$ is an
algebra of subsets of $\Lambda _{t}$.

Let $\mathcal{F}_{\Lambda _{t_{1}}\times \cdots \times \Lambda _{t_{n}}}$ be
the product algebra on $\Lambda _{t_{1}}\times \cdots \times \Lambda
_{t_{n}} $, that is, the algebra generated by all rectangles $F_{1}\times
\cdots \times F_{n}\subseteq \Lambda _{t_{1}}\times \cdots \times \Lambda
_{t_{n}}$ with measurable sides $F_{k}\in \mathcal{F}_{\Lambda _{t_{k}}}.$

Denote by $\widetilde{\Lambda }:=\dprod \limits_{t\in T}\Lambda _{t}$ the
Cartesian product \cite{dunford} of all sets $\Lambda _{t},$ $t\in T$, that
is, $\widetilde{\Lambda }$ is the collection of all functions $\lambda
:T\rightarrow $ $\cup _{t\in T}\Lambda _{t}$ with values $\lambda
(t):=\lambda _{t}\in \Lambda _{t}.$

The set of all cylindrical subsets of $\widetilde{\Lambda }$ of the form 
\begin{equation}
\mathcal{J}_{(t_{1},...,t_{n})}(F):=\{\lambda \in \widetilde{\Lambda }\mid
(\lambda _{t_{1}},...,\lambda _{t_{n}})\in F\}\text{, \ \ \ \ }F\in \mathcal{%
F}_{\Lambda _{t_{1}}\times \cdots \times \Lambda _{t_{n}}},  \label{1_}
\end{equation}%
where $\{t_{1},...,t_{n}\}\subset T,$\ $n\in \mathbb{N},$ constitutes \cite%
{dunford, 17} an algebra on $\widetilde{\Lambda }$ that we further denote by 
$\mathcal{A}_{\widetilde{\Lambda }}.$ Since $\mathcal{J}%
_{(t_{1},...,t_{n})}(F)\equiv \pi _{(t_{1},...,t_{n})}^{-1}(F),$ where the
function%
\begin{equation}
\pi _{(t_{1},...,t_{n})}:\widetilde{\Lambda }\rightarrow \Lambda
_{t_{1}}\times \cdots \times \Lambda _{t_{n}}
\end{equation}%
is the canonical projection 
\begin{equation}
\pi _{(t_{1},...,t_{n})}(\lambda ):=(\pi _{t_{1}}(\lambda ),...,\pi
_{t_{n}}(\lambda )),\text{ \ \ \ \ \ }\pi _{t}(\lambda ):=\lambda _{t},
\label{2_}
\end{equation}%
we have 
\begin{equation}
\mathcal{A}_{\widetilde{\Lambda }}=\{\pi
_{(t_{1},...,t_{n})}^{-1}(F)\subseteq \widetilde{\Lambda }\text{ }\mid \text{
}F\in \mathcal{F}_{\Lambda _{t_{1}}\times \cdots \times \Lambda _{t_{n}}},%
\text{\ \ }\{t_{1},...,t_{n}\}\subset T,\text{\ \ }n\in \mathbb{N}\}.
\label{3_}
\end{equation}%
\smallskip 

Introduce a family 
\begin{equation}
\{\mathfrak{M}_{(t_{1},...,t_{n})}:\mathcal{F}_{\Lambda _{t_{1}}\times
\cdots \times \Lambda _{t_{n}}}\rightarrow \mathcal{L}_{\mathcal{H}%
}^{(s)}\mid \mathfrak{M}_{(t_{1},...,t_{n})}(\Lambda _{t_{1}}\times \cdots
\times \Lambda _{t_{n}})=\mathbb{I}_{\mathcal{H}},\text{ \ }%
\{t_{1},...,t_{n}\}\subset T,\text{ \ }n\in \mathbb{N}\}  \label{4_}
\end{equation}%
of normalized finitely additive $\mathcal{L}_{\mathcal{H}}^{(s)}$-valued
measures $\mathfrak{M}_{(t_{1},...,t_{n})}$, each specified by a finite
collection $\{t_{1},...,t_{n}\}\subset T$ \ of indices and having values
that are self-adjoint bounded linear operators on $\mathcal{H}.$

Let, for each finite index collection $\{t_{1},...,t_{n}\}\subset T,$ these
measures satisfy the consistency condition 
\begin{eqnarray}
\mathfrak{M}_{(t_{1},...,t_{n})}(F_{1}\times \cdots \times F_{n}) &=&%
\mathfrak{M}_{(t_{i_{1}},...,t_{i_{_{n}}})}(F_{i_{1}}\times \cdots \times
F_{i_{n}}),  \label{5_} \\
F_{i} &\in &\mathcal{F}_{\Lambda _{t_{i}}},\text{ \ \ }i=1,...,n,  \notag
\end{eqnarray}%
for all permutations $\binom{1,,...,n}{i_{1},...,i_{n}}$ of indexes and the
consistency condition 
\begin{eqnarray}
&&\mathfrak{M}_{(t_{1},...,t_{_{n}})}\left( \{(\lambda _{1},...,\lambda
_{n})\in \Lambda _{t_{1}}\times \cdots \times \Lambda _{t_{n}}\mid (\lambda
_{i_{1}},...,\lambda _{i_{k}})\in F\}\right)  \label{6__} \\
&=&\mathfrak{M}_{(t_{i_{1}},...,t_{i_{k}})}\left( F\right) ,\text{ \ \ \ \ \ 
}F\in \mathcal{F}_{\Lambda _{t_{i_{1}}}\times \cdots \times \Lambda
_{t_{i_{_{k}}}}},  \notag
\end{eqnarray}%
for each $\{t_{i_{1}},...,t_{i_{k}}\}\subseteq
\{t_{1},...,t_{n}\}.\smallskip $

\begin{lemma}
Let $\mathcal{H}$ be a complex separable Hilbert space. For a family (\ref%
{4_}) of normalized finitely additive $\mathcal{L}_{\mathcal{H}}^{(s)}$%
-valued measures $\mathfrak{M}_{(t_{1},...,t_{n})}$ satisfying the
consistency conditions (\ref{5_}) and (\ref{6__}), there exists a unique
normalized finitely additive $\mathcal{L}_{\mathcal{H}}^{(s)}$-valued
measure 
\begin{equation}
\mathbb{M}:\mathcal{A}_{\widetilde{\Lambda }}\rightarrow \mathcal{L}_{%
\mathcal{H}}^{(s)},\ \ \ \ \mathbb{M(}\widetilde{\Lambda })=\mathbb{I}_{%
\mathcal{H}},  \label{7_}
\end{equation}%
on $(\widetilde{\Lambda },\mathcal{A}_{\widetilde{\Lambda }})$ such that 
\begin{equation}
\mathbb{M}\left( \pi _{(t_{1},...,t_{n})}^{-1}(F)\right) =\mathfrak{M}%
_{(t_{1},...,t_{_{n}})}(F)  \label{8_}
\end{equation}%
for all sets $F\in \mathcal{F}_{\Lambda _{t_{1}}\times \cdots \times \Lambda
_{t_{n}}}$ and an arbitrary index collection $\{t_{1},...,t_{n}\}\subset T,$ 
$n\in \mathbb{N}.$\medskip
\end{lemma}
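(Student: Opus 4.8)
The plan is to reproduce the \emph{finitely additive} part of the Kolmogorov construction and nothing more: since the lemma only asserts finite additivity of $\mathbb{M}$, none of the deeper ingredients of the extension theorem (the Carath\'eodory extension, the compactness/regularity argument producing $\sigma$-additivity) are needed, and the operator-valued character of the $\mathfrak{M}_{(t_{1},\dots ,t_{n})}$ plays no role beyond the linear (Banach) structure of $\mathcal{L}_{\mathcal{H}}^{(s)}$; the whole argument lives on the algebra of cylinder sets. The first step is a \textbf{lifting observation}: for finite index collections $\{s_{1},\dots ,s_{m}\}\subseteq \{u_{1},\dots ,u_{p}\}$ and $G\in \mathcal{F}_{\Lambda _{s_{1}}\times \cdots \times \Lambda _{s_{m}}}$, one has $\pi _{(s_{1},\dots ,s_{m})}^{-1}(G)=\pi _{(u_{1},\dots ,u_{p})}^{-1}(\widehat{G})$, where $\widehat{G}\in \mathcal{F}_{\Lambda _{u_{1}}\times \cdots \times \Lambda _{u_{p}}}$ is the preimage of $G$ under the coordinate projection (with the appropriate reordering) $\Lambda _{u_{1}}\times \cdots \times \Lambda _{u_{p}}\rightarrow \Lambda _{s_{1}}\times \cdots \times \Lambda _{s_{m}}$; this is immediate from the definitions and from the stability of product algebras under coordinate projections. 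Using the consistency condition (\ref{5_}) to permute indices and (\ref{6__}) to pad by the extra coordinates, one gets the matching identity $\mathfrak{M}_{(u_{1},\dots ,u_{p})}(\widehat{G})=\mathfrak{M}_{(s_{1},\dots ,s_{m})}(G)$. Consequently any finite family of cylinder sets can be written simultaneously as cylinders over a single common index set (the union of the individual ones), with base sets all lying in one product algebra. I would also record that each $\pi _{(t_{1},\dots ,t_{n})}$ is \emph{surjective}: since every $\Lambda _{t}$ is non-empty, any values prescribed on finitely many coordinates extend (by a choice function on $T$) to an element of $\widetilde{\Lambda }$.

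Next I would set
\begin{equation}
\mathbb{M}\bigl( \pi _{(t_{1},\dots ,t_{n})}^{-1}(F)\bigr) :=\mathfrak{M}_{(t_{1},\dots ,t_{n})}(F),\qquad F\in \mathcal{F}_{\Lambda _{t_{1}}\times \cdots \times \Lambda _{t_{n}}},
\end{equation}
and verify \textbf{well-definedness}. If $\pi _{(s_{1},\dots ,s_{m})}^{-1}(G)=\pi _{(t_{1},\dots ,t_{n})}^{-1}(F)$, lift both sides to the common index set $\{u_{1},\dots ,u_{p}\}=\{s_{1},\dots ,s_{m}\}\cup \{t_{1},\dots ,t_{n}\}$, obtaining $\pi _{(u_{1},\dots ,u_{p})}^{-1}(\widehat{G})=\pi _{(u_{1},\dots ,u_{p})}^{-1}(\widehat{F})$; surjectivity of $\pi _{(u_{1},\dots ,u_{p})}$ forces $\widehat{G}=\widehat{F}$, and then the lifting identity gives $\mathfrak{M}_{(s_{1},\dots ,s_{m})}(G)=\mathfrak{M}_{(u_{1},\dots ,u_{p})}(\widehat{G})=\mathfrak{M}_{(u_{1},\dots ,u_{p})}(\widehat{F})=\mathfrak{M}_{(t_{1},\dots ,t_{n})}(F)$. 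This makes $\mathbb{M}:\mathcal{A}_{\widetilde{\Lambda }}\rightarrow \mathcal{L}_{\mathcal{H}}^{(s)}$ a well-defined set function (its values automatically lie in $\mathcal{L}_{\mathcal{H}}^{(s)}$, being equal to values of the $\mathfrak{M}_{(u)}$), and by construction it satisfies (\ref{8_}); normalization $\mathbb{M}(\widetilde{\Lambda })=\mathfrak{M}_{(t_{1})}(\Lambda _{t_{1}})=\mathbb{I}_{\mathcal{H}}$ is immediate. For \textbf{finite additivity}, given disjoint $A_{1},A_{2}\in \mathcal{A}_{\widetilde{\Lambda }}$ I would write $A_{j}=\pi _{(u_{1},\dots ,u_{p})}^{-1}(G_{j})$ over a common index set; surjectivity turns $A_{1}\cap A_{2}=\varnothing $ into $G_{1}\cap G_{2}=\varnothing $ and gives $A_{1}\cup A_{2}=\pi _{(u_{1},\dots ,u_{p})}^{-1}(G_{1}\cup G_{2})$, so finite additivity of $\mathfrak{M}_{(u_{1},\dots ,u_{p})}$ on the product algebra yields $\mathbb{M}(A_{1}\cup A_{2})=\mathbb{M}(A_{1})+\mathbb{M}(A_{2})$. \textbf{Uniqueness} is then trivial: by (\ref{3_}) the cylinder sets exhaust $\mathcal{A}_{\widetilde{\Lambda }}$, so any $\mathbb{M}'$ obeying (\ref{8_}) agrees with $\mathbb{M}$ on every element of $\mathcal{A}_{\widetilde{\Lambda }}$.

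The only place that calls for genuine care — the ``hard part'', such as it is — is the well-definedness step: checking that the two consistency conditions (\ref{5_}) and (\ref{6__}) are precisely strong enough to render $\mathfrak{M}_{(t_{1},\dots ,t_{n})}(F)$ invariant under all the ways of presenting a given cylinder set, namely reordering the index tuple, padding it with redundant coordinates, and enlarging $F$ by a coordinate ranging over its whole factor. Everything else is the routine bookkeeping of cylinder algebras, identical to the scalar Kolmogorov situation and insensitive to the operator-valued codomain $\mathcal{L}_{\mathcal{H}}^{(s)}$.
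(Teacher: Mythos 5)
Your proposal is correct and follows essentially the same route as the paper's own proof: both define $\mathbb{M}$ on cylinder sets via (\ref{8_}), verify well-definedness and finite additivity by passing two representations (or two disjoint cylinders) to a common index set and invoking the consistency conditions (\ref{5_}), (\ref{6__}), and note that uniqueness is forced because (\ref{3_}) exhausts $\mathcal{A}_{\widetilde{\Lambda }}$. Your explicit appeal to surjectivity of the canonical projections is simply the cleanly stated version of the step the paper uses implicitly when it passes from (\ref{10_}) to (\ref{12_}).
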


\begin{proof}
Our proof of lemma 3 is quite similar to the proof \cite{7, 17} of the
corresponding items in the Kolmogorov extension theorem for consistent
probability measures (\ref{15}). Let $\mathcal{A}_{\widetilde{\Lambda }}$ be
algebra (\ref{3_}) on $\widetilde{\Lambda }$. Suppose that a set $A\in 
\mathcal{A}_{\widetilde{\Lambda }}$ admits representation $A=\pi
_{(t_{1},...,t_{n})}^{-1}(F),$ where $\{t_{1},...,t_{n}\}\subset T$ and $%
F\in \mathcal{F}_{\Lambda _{t_{1}}\times \cdots \times \Lambda _{t_{n}}}$,
and take%
\begin{equation}
\mathbb{M}(A):=\mathfrak{M}_{(t_{1},...,t_{_{n}})}(F).  \label{123}
\end{equation}%
In order to show that relation (\ref{123}) defines correctly a set function $%
\mathbb{M}$ on $\mathcal{A}_{\widetilde{\Lambda }},$ we must prove that this
relation implies a unique value of $\mathbb{M}$ on a set $A$ even if this
set $A$ admits two different representations, say: 
\begin{eqnarray}
A &=&\pi _{(t_{i_{1}},...,t_{i_{k}})}^{-1}(F)\equiv \{\lambda \in \widetilde{%
\Lambda }\mid (\lambda _{t_{i_{1}}},...,\lambda _{t_{i_{k}}})\in F\},
\label{10_} \\
A &=&\pi _{(t_{j_{1}},...,t_{j_{_{m}}})}^{-1}(F^{\prime })\equiv \{\lambda
\in \widetilde{\Lambda }\mid (\lambda _{t_{j_{1}}},...,\lambda
_{t_{j_{m}}})\in F^{\prime }\}  \notag
\end{eqnarray}%
for some sets $F\in \mathcal{F}_{\Lambda _{t_{i_{1}}}\times \cdots \times
\Lambda _{t_{i_{k}}}}$ and $F^{\prime }\in \mathcal{F}_{\Lambda
_{t_{j_{1}}}\times \cdots \times \Lambda _{t_{j_{m}}}}$ and some index
collections $\{t_{i_{1}},...,t_{i_{k}}\},$ $\{t_{j_{1}},...,t_{j_{_{m}}}\}%
\subset T$ .

Denote 
\begin{equation}
\{t_{i_{1}},...,t_{i_{k}}\}\cup
\{t_{j_{1}},...,t_{j_{_{m}}}\}=\{t_{1},...,t_{_{n}}\}.  \label{11_}
\end{equation}%
From representations (\ref{10_}) it follows that sets $F$ and $F^{\prime }$
are such that, for a point $(\lambda _{1},...,\lambda _{n})$ in the set $%
\Lambda _{t_{1}}\times \cdots \times \Lambda _{t_{n}},$ the condition $%
(\lambda _{i_{1}},...,\lambda _{i_{k}})\in F$ and the condition $(\lambda
_{j_{1}},...,\lambda _{j_{m}})\in F^{\prime }$ are equivalent, that is:%
\begin{eqnarray}
&&\left\{ \text{ }(\lambda _{1},...,\lambda _{n})\in \Lambda _{t_{1}}\times
\cdots \times \Lambda _{t_{n}}\mid (\lambda _{i_{1}},...,\lambda
_{i_{k}})\in F\right\}   \label{12_} \\
&=&\left\{ \text{ }(\lambda _{1},...,\lambda _{n})\in \Lambda _{t_{1}}\times
\cdots \times \Lambda _{t_{n}}\mid (\lambda _{j_{1}},...,\lambda
_{j_{m}})\in F^{\prime }\right\} .  \notag
\end{eqnarray}%
Due to relations (\ref{123}) - (\ref{12_}) and the consistency conditions (%
\ref{5_}) and (\ref{6__}), we have:%
\begin{eqnarray}
\mathbb{M}(\pi _{(t_{i_{1}},...,t_{i_{k}})}^{-1}(F)) &=&\mathfrak{M}%
_{(t_{i_{1}},...,t_{i_{k}})}(F)  \label{13_} \\
&=&\mathfrak{M}_{(t_{1},...,t_{_{n}})}\left( \left\{ (\lambda
_{1},...,\lambda _{n})\in \Lambda _{t_{1}}\times \cdots \times \Lambda
_{t_{n}}\mid (\lambda _{i_{1}},...,\lambda _{i_{k}})\in F\right\} \right)  
\notag \\
&=&\mathfrak{M}_{(t_{1},...,t_{_{n}})}\left( \left\{ (\lambda
_{1},...,\lambda _{n})\in \Lambda _{t_{1}}\times \cdots \times \Lambda
_{t_{n}}\mid (\lambda _{j_{1}},...,\lambda _{j_{m}})\in F^{\prime }\right\}
\right)   \notag \\
&=&\mathfrak{M}_{(t_{j_{1}},...,t_{j_{m}})}(F^{\prime })  \notag \\
&=&\mathbb{M(}\pi _{(t_{j_{1}},...,t_{j_{_{m}}})}^{-1}(F^{\prime })).  \notag
\end{eqnarray}

Thus, relation (\ref{123}) defines a unique set function $\mathbb{M}:%
\mathcal{A}_{\Lambda }\rightarrow \mathcal{L}_{\mathcal{H}}$ satisfying
condition (\ref{8_}). Since $\widetilde{\Lambda }=\pi
_{(t_{1},...,t_{n})}^{-1}(\Lambda _{t_{1}}\times \cdots \times \Lambda
_{t_{n}})$ and $\mathfrak{M}_{(t_{1},...,t_{n})}(\Lambda _{t_{1}}\times
\cdots \times \Lambda _{t_{n}})=\mathbb{I}_{\mathcal{H}}$, from (\ref{123})
it follows that this set function $\mathbb{M}$ is normalized, that is, $%
\mathbb{M}(\widetilde{\Lambda })=\mathbb{I}_{\mathcal{H}}.$

In order to prove that the normalized set function $\mathbb{M}:\mathcal{A}_{%
\widetilde{\Lambda }}\rightarrow \mathcal{L}_{\mathcal{H}}$ is additive, let
us consider in the algebra $\mathcal{A}_{\Lambda }$ two disjoint sets 
\begin{equation}
A_{1}=\pi _{(t_{i_{1}},...,t_{i_{k}})}^{-1}(F_{1}),\text{ \ \ }A_{2}=\pi
_{(t_{j_{1}},...,t_{j_{m}})}^{-1}(F_{2}),\text{ \ \ }A_{1}\cap
A_{2}=\varnothing ,  \label{14_}
\end{equation}%
specified by some index collections $\{t_{i_{1}},...,t_{i_{k}}\},$ $%
\{t_{j_{1}},...,t_{j_{m}}\}\subseteq $ $\{t_{1},...,t_{n}\}\subset T$ and
sets $F_{1}\in \mathcal{F}_{\Lambda _{t_{i_{1}}}\times \cdots \times \Lambda
_{t_{i_{k}}}}$ and $F_{2}\in \mathcal{F}_{\Lambda _{t_{j_{1}}}\times \cdots
\times \Lambda _{t_{j_{m}}}}.$ Since $A_{1}\cap A_{2}$ $=\varnothing ,$ the
sets $F_{1},$ $F_{2}$ in (\ref{14_}) are such that, for a point $(\lambda
_{1},...,\lambda _{n})$ in $\Lambda _{t_{1}}\times \cdots \times \Lambda
_{t_{n}},$ conditions $(\lambda _{i_{1}},...,\lambda _{i_{k}})\in F_{1}$ and 
$(\lambda _{j_{1}},...,\lambda _{j_{m}})\in F_{2}$ are mutually exclusive,
that is: 
\begin{eqnarray}
&&\left\{ (\lambda _{1},...,\lambda _{n})\in \Lambda _{t_{1}}\times \cdots
\times \Lambda _{t_{n}}\mid (\lambda _{i_{1}},...,\lambda _{i_{k}})\in
F_{1}\right\}   \label{15_} \\
&&\cap \left\{ (\lambda _{1},...,\lambda _{n})\in \Lambda _{t_{1}}\times
\cdots \times \Lambda _{t_{n}}\mid (\lambda _{j_{1}},...,\lambda
_{j_{m}})\in F_{2}\right\}   \notag \\
&=&\varnothing .  \notag
\end{eqnarray}%
Taking into the account relations (\ref{123}), (\ref{14_}), (\ref{15_}), the
consistency conditions (\ref{5_}), (\ref{6__}) and also that each $\mathfrak{%
M}_{(t_{1},...,t_{n})}$ is a finitely additive measure, we have%
\begin{eqnarray}
&&\mathbb{M(}A_{1}\cup A_{2})  \label{16_} \\
&=&\mathfrak{M}_{(t_{1},...,t_{n})}(\{(\lambda _{1},...,\lambda _{n})\in
\Lambda _{t_{1}}\times \cdots \times \Lambda _{t_{n}}\mid (\lambda
_{i_{1}},...,\lambda _{i_{k}})\in F_{1}\text{ \ or \ }(\lambda
_{j_{1}},...,\lambda _{j_{m}})\in F_{2}\})  \notag \\
&=&\mathfrak{M}_{(t_{1},...,t_{n})}\left( \left\{ (\lambda _{1},...,\lambda
_{n})\in \Lambda _{t_{1}}\times \cdots \times \Lambda _{t_{n}}\mid (\lambda
_{i_{1}},...,\lambda _{i_{k}})\in F_{1}\right\} \right)   \notag \\
&&+\mathfrak{M}_{(t_{1},...,t_{n})}\left( \left\{ (\lambda _{1},...,\lambda
_{n})\in \Lambda _{t_{1}}\times \cdots \times \Lambda _{t_{n}}\mid (\lambda
_{j_{1}},...,\lambda _{j_{m}})\in F_{2}\right\} \right)   \notag \\
&=&\mathfrak{M}_{(t_{i_{1}},...,t_{i_{k}})}\left( F_{1}\right) +\mathfrak{M}%
_{(t_{j_{1}},...,t_{j_{m}})}\left( F_{2}\right)   \notag \\
&=&\mathbb{M(}A_{1})+\mathbb{M(}A_{2}).  \notag
\end{eqnarray}%
Therefore, the normalized set function $\mathbb{M}$ on $\mathcal{A}_{%
\widetilde{\Lambda }}$ defined via relation (\ref{123}) is additive and
constitutes a finitely additive measure on $\mathcal{A}_{\widetilde{\Lambda }%
}$, see remark 1.

Thus, the set function $\mathbb{M}:\mathcal{A}_{\widetilde{\Lambda }%
}\rightarrow \mathcal{L}_{\mathcal{H}}^{(s)}$ defined by relation (\ref{123}%
) constitutes a unique normalized finitely additive $\mathcal{L}_{\mathcal{H}%
}^{(s)}$-valued measure satisfying representation (\ref{8_}). This completes
the proof of lemma 3.
\end{proof}

\section{proof of lemma 2}

For a quantum observable $X\in \mathfrak{X}_{\mathcal{H}}$, bounded or
unbounded, and a Borel function $\varphi :\mathbb{R\rightarrow R}$, the
quantum observable $\varphi (X)\equiv \varphi \circ X$ is defined as \cite%
{reed}%
\begin{equation}
\varphi (X):=\dint \limits_{\mathbb{R}}\varphi (x)\mathrm{P}_{X}(\mathrm{d}x)
\end{equation}%
and has the spectrum $\mathrm{sp}\varphi (X)=\varphi (\mathrm{sp}X).$ Its
spectral measure $\mathrm{P}_{\varphi (X)}\ $satisfies the relation%
\begin{equation}
\mathrm{P}_{\varphi (X)}(B)=\mathrm{P}_{X}(\varphi ^{-1}(B)),\text{ \ \ }%
B\in \mathcal{B}_{\mathrm{sp}\varphi (X)}.  \label{wd}
\end{equation}%
Due to (\ref{wd}), if quantum observables $X,Y_{1},...,Y_{m}$ mutually
commute, then the same is true for quantum observables $\varphi
(X),Y_{1},...,Y_{m}$.

For arbitrary mutually commuting observables $X,Y_{1},...,Y_{m}$ on $%
\mathcal{H},$ relation (\ref{28''}) and property (\ref{wd}) imply 
\begin{eqnarray}
&&\nu _{\rho }\mathbb{(}f_{\varphi (X)}^{-1}(B)\cap
f_{Y_{1}}^{-1}(B_{1})\cap \cdots \cap f_{Y_{m}}^{-1}(B_{m}))  \label{z} \\
&=&\mathrm{tr}[\rho \{ \mathrm{P}_{\varphi (X)}(B)\cdot \mathrm{P}%
_{Y_{1}}(B_{1})\cdot \ldots \cdot \mathrm{P}_{Y_{m}}(B_{m})\}]  \notag \\
&=&\mathrm{tr}[\rho \left \{ \mathrm{P}_{X}(\varphi ^{-1}(B))\cdot \mathrm{P}%
_{Y_{1}}(B_{1})\cdot \ldots \cdot \mathrm{P}_{Y_{m}}(B_{m})\right \} ] 
\notag \\
&=&\nu _{\rho }(f_{X}^{-1}(\varphi ^{-1}(B))\cap f_{Y_{1}}^{-1}(B_{1})\cap
\cdots \cap f_{Y_{m}}^{-1}(B_{m}))  \notag \\
&=&\nu _{\rho }((\varphi \circ f_{X})^{-1}(B)\cap f_{Y_{1}}^{-1}(B_{1})\cap
\cdots \cap f_{Y_{m}}^{-1}(B_{m})).  \notag
\end{eqnarray}%
This proves lemma 2.

\section{proof of theorem 3 and corollary 3}

In order to prove the existence point of theorem 3, let us take the specific
statistically noncontextual qHV model that we used for the proof of theorem
2. Namely, let $(\Lambda ,\mathcal{F}_{\Lambda })$ and $\pi _{X}:\Lambda
\rightarrow \mathrm{sp}X,$ $X\in \mathfrak{X}_{\mathcal{H}}$, be the
measurable space and the random variables specified in theorem 1. Then the
one-to-one mapping $\widetilde{\Phi }:\mathfrak{X}_{\mathcal{H}}\rightarrow 
\mathfrak{F}_{(\Lambda ,\mathcal{F}_{\Lambda })},$ defined by $\ \widetilde{%
\Phi }(X)=\pi _{X},$ $\forall X\in \mathfrak{X}_{\mathcal{H}},$ satisfies
the setting of theorem 2.

Note that, for each Borel function $\varphi :\mathbb{R\rightarrow R}$ and
each observable $X,$ the random variable $\varphi \circ \pi _{X}:\Lambda
\rightarrow \mathrm{sp}\varphi (X)$ satisfies the spectral correspondence
rule $(\varphi \circ \pi _{X})(\Lambda )=\mathrm{sp}\varphi (X)$ but does
not coincide with the random variable $\pi _{\varphi (X)}$ and does not
belong to the image $\widetilde{\Phi }(\mathfrak{X}_{\mathcal{H}})\mathfrak{.%
}$ However, due to representation (\ref{28''}) and property (\ref{pp}), the
random variables $\varphi \circ \pi _{X}$ and $\pi _{\varphi (X)}$
equivalently model the quantum observable $\varphi (X)$ under all joint von
Neumann measurements, \emph{independently} of their measurements contexts.

Taking all this into the account, let $\phi _{X}^{(\theta )}(Y_{\theta })=X,$
with a Borel function $\phi _{X}^{(\theta )}:\mathbb{R}\rightarrow \mathbb{R}%
,$ an observable $Y_{\theta }\in \mathfrak{X}_{\mathcal{H}}$ and $\theta \in
\Theta _{X},$ be a possible functional representation of a quantum
observable $X\in \mathfrak{X}_{\mathcal{H}}$ via a quantum observable $%
Y_{\theta }$. Here, $\Theta _{X}$ is an index set of all such
representations of an observable $X.$ Since there is always the trivial
representation where $Y_{\theta _{0}}\equiv X,$ $\phi _{X}^{(\theta
_{0})}(X)\equiv X,$ the set $\Theta _{X}$ is non-empty for each $X.$

Denote by $\mathfrak{[}\pi _{X}]$ the set of the following random variables
on $(\Lambda ,\mathcal{F}_{\Lambda }):$%
\begin{eqnarray}
\mathfrak{[}\pi _{X}] &=&\{g_{X}^{(\theta )}\mid \theta \in \Theta _{X}\},%
\text{\ \ \ \ where \ }g_{X}^{(\theta _{0})}=\pi _{X},\text{ \ \ }%
g_{X}^{(\theta )}=\phi _{X}^{(\theta )}\circ \pi _{Y_{\theta }},  \label{ss'}
\\
\phi _{X}^{(\theta )}\circ Y_{\theta } &=&X,\text{ \ \ }\phi _{X}^{(\theta
)}:\mathbb{R}\rightarrow \mathbb{R},\text{ \ \ }Y_{\theta }\in \mathfrak{X}_{%
\mathcal{H}}.  \notag
\end{eqnarray}%
By lemma 2, all random variables in $\mathfrak{[}\pi _{X}]$ equivalently
represent an observable $X$ under all joint von Neumann measurements.
Moreover, by our construction of the set $\mathfrak{[}\pi _{X}],$ each
random variable $g_{X}^{(\theta )}\in $ $\mathfrak{[}\pi _{X}]$ satisfies
the spectral correspondence rule $g_{X}^{(\theta )}(\Lambda )=\mathrm{sp}X.$

In order to prove that 
\begin{equation}
\ X_{1}\neq X_{2}\text{ \ \ }\Rightarrow \text{ \ \ }\mathfrak{[}\pi
_{X_{1}}]\cap \mathfrak{[}\pi _{X_{2}}]=\varnothing ,  \label{ll}
\end{equation}%
let us suppose that, for some observables $X_{1}\neq X_{2},$ the
intersection $\mathfrak{[}\pi _{X_{1}}]\cap \mathfrak{[}\pi _{X_{2}}]\neq
\varnothing .$ Then $\mathrm{sp}X_{1}=\mathrm{sp}X_{2}$ and a common random
variable $g\in \mathfrak{[}\pi _{X_{1}}]\cap \mathfrak{[}\pi _{X_{2}}]$
admits either of representations 
\begin{eqnarray}
g &=&\phi _{X_{1}}\circ \pi _{Y_{1}},\text{ \ \ }g=\phi _{X_{2}}\circ \pi
_{Y_{2}},\text{ \ \ where}  \label{nn'} \\
\phi _{X_{1}}(Y_{1}) &=&X_{1},\text{\ \ \ }\phi _{X_{2}}(Y_{2})=X_{2},\text{
\ \ }Y_{1},Y_{2}\in \mathfrak{X}_{\mathcal{H}},  \notag
\end{eqnarray}%
with 
\begin{eqnarray}
\pi _{Y_{1}}^{-1}(\phi _{_{X_{1}}}^{-1}(B)) &=&(\phi _{X_{1}}\circ \pi
_{Y_{1}})^{-1}(B)  \label{cb} \\
&=&(\phi _{X_{2}}\circ \pi _{Y_{2}})^{-1}(B)=\pi _{Y_{2}}^{-1}(\phi
_{_{X_{2}}}^{-1}(B))  \notag
\end{eqnarray}%
for each $B\in \mathcal{B}_{\mathrm{sp}X_{1}}=\mathcal{B}_{\mathrm{sp}%
X_{2}}. $ Due to (\ref{wd}), (\ref{21}) and (\ref{cb}), we have:%
\begin{eqnarray}
\mathrm{P}_{X_{1}}(B) &=&\mathrm{P}_{\phi _{_{X_{1}}}(Y_{1})}(B)=\mathrm{P}%
_{Y_{1}}(\phi _{_{X_{1}}}^{-1}(B))  \label{eq} \\
&=&\mathbb{M(}\pi _{Y_{1}}^{-1}(\phi _{_{X_{1}}}^{-1}(B))  \notag \\
&=&\mathbb{M((}\pi _{Y_{2}}^{-1}(\phi _{_{X_{2}}}^{-1}(B)))  \notag \\
&=&\mathrm{P}_{Y_{2}}(\phi _{_{X_{2}}}^{-1}(B))=\mathrm{P}_{\phi
_{_{X_{2}}}(Y_{2})}(B)=\mathrm{P}_{X_{2}}(B),\text{ \ \ \ }  \notag \\
B &\in &\mathcal{B}_{\mathrm{sp}X_{1}}=\mathcal{B}_{\mathrm{sp}X_{2}}. 
\notag
\end{eqnarray}%
In view of the spectral theorem (\ref{2}), relation $\mathrm{P}_{X_{1}}(B)=%
\mathrm{P}_{X_{2}}(B),$ $\forall B,$ implies $X_{1}=X_{2}$. Thus, if $%
\mathfrak{[}\pi _{X_{1}}]\cap \mathfrak{[}\pi _{X_{2}}]\neq \varnothing $,
then $X_{1}=X_{2}.$ This proves (\ref{ll}).

Let 
\begin{equation}
\mathfrak{F}_{\mathfrak{X}_{\mathcal{H}}}:=\dbigcup \limits_{X\in \mathfrak{X%
}_{\mathcal{H}}}\mathfrak{[}\pi _{X}]  \label{hj}
\end{equation}%
be the union of all disjoint sets $\mathfrak{[}\pi _{X}],$ $\forall X\in 
\mathfrak{X}_{\mathcal{H}},$ of random variables and $\Psi :\mathfrak{F}_{%
\mathfrak{X}_{\mathcal{H}}}\rightarrow \mathfrak{X}_{\mathcal{H}}$ be the
mapping defined via the relation%
\begin{equation}
\Psi (g)=X,\text{ \ \ }g\in \mathfrak{[}\pi _{X}]\subset \mathfrak{F}_{%
\mathfrak{X}_{\mathcal{H}}},\text{ \ }X\in \mathfrak{X}_{\mathcal{H}},
\label{def}
\end{equation}%
and having, therefore, the preimage 
\begin{equation}
\Psi ^{-1}(\{X\})=\mathfrak{[}\pi _{X}],\text{ \ \ }X\in \mathfrak{X}_{%
\mathcal{H}}.
\end{equation}

For a Borel function $\varphi :\mathbb{R\rightarrow R}$ and an arbitrary $%
g_{X}^{(\theta )}\in \mathfrak{[}\pi _{X}]=\Psi ^{-1}(\{X\}),$ consider the
random variable $\varphi \circ g_{X}^{(\theta )}$. Due to our above
construction (\ref{ss'}) of the set $\mathfrak{[}\pi _{X}],$ a random
variable $g_{X}^{(\theta )}$ admits a representation 
\begin{equation}
g_{X}^{(\theta )}=\phi _{X}^{(\theta )}\circ \pi _{Y_{\theta }},
\end{equation}%
where a Borel function $\phi _{X}^{(\theta )}:\mathbb{R}\rightarrow \mathbb{R%
}$ and a quantum observable $Y_{\theta }$ are such that $\phi _{X}^{(\theta
)}\circ Y_{\theta }=X.$ We have: 
\begin{eqnarray}
\varphi \circ g_{X}^{(\theta )} &=&(\varphi \circ \phi _{X}^{(\theta
)})\circ \pi _{Y_{\theta }} \\
&\in &\mathfrak{[}\pi _{(\varphi \circ \phi _{_{X}}^{(\theta )})\circ
Y_{\theta }}]=\mathfrak{[}\pi _{\varphi \circ X}]=\Psi ^{-1}(\{ \varphi
\circ X\}).  \notag
\end{eqnarray}

Thus, the mapping $\Psi ,$ defined by relation (\ref{def}), satisfies the
functional condition (\ref{ff}).

Combining all this with representation (\ref{28''}), property (\ref{pp}),
definitions (\ref{ss'}), (\ref{def}) we prove the existence point of theorem
3, in particular, the context-independence of representation (\ref{kl}).

The relation $\sum \alpha _{j}\rho _{j}\overset{\mathfrak{R}}{\mapsto }\sum
\alpha _{j}\nu _{\rho _{j}}$ in theorem 3 follows explicitly from relation (%
\ref{ma}) in proposition 1. This completes the proof of theorem 3.

In corollary 3, the context-independent representations (\ref{gg'}), (\ref%
{gg}) follow from (\ref{kl}), (\ref{qww}), (\ref{2}) and are proved quite
similarly to our proof of (\ref{2d}), (\ref{29}). \bigskip

\end{document}